\newcommand{\Comment}[1]{}
\long\def\LongVersion#1\LongVersionEnd{#1}
\long\def\ShortVersion#1\ShortVersionEnd{}
\newtheorem{theorem}{Theorem}[section]
\newtheorem{lemma}[theorem]{Lemma}
\newtheorem{observation}[theorem]{Observation}
\theoremstyle{definition}
\newtheorem{example}[theorem]{Example}
\theoremstyle{plain}
\renewcommand{\paragraph}[1]{\par\noindent\textbf{#1}~}
\newcommand{\Integers}[0]{\mathbb{Z}}
\newcommand{\Reals}[0]{\mathbb{R}}
\DeclareMathOperator{\Max}{max}
\DeclareMathOperator{\SecondMax}{max2}
\DeclareMathOperator{\Min}{min}
\newcommand{\SignalFunc}[0]{\mathcal{S}}
\newcommand{\Probability}[0]{\mathbb{P}}
\newcommand{\Expectation}[0]{\mathbb{E}}
\newcommand{\comment}[1]{}
\newcommand{\Revenue}[0]{\mathbf{Rev}}
\newcommand{\SocialWelfare}[0]{\mathbf{SW}}
\newcommand{\one}{\boldsymbol{1}}
\newcommand{\MainLp}{\textbf{LP1}}
\newcommand{\MainLpBayesian}{\textbf{LP2}}
\newcommand{\Sec}{Section}
\begin{document}
\title{Signaling Schemes for Revenue Maximization}

\author{
Yuval Emek\thanks{
ETH Zurich.
{\tt yuval.emek@tik.ee.ethz.ch}}
\and
Michal Feldman\thanks{
Hebrew University and Microsoft Research.
Partially supported by the Israel Science Foundation (grant
number 1219/09) and by the Leon Recanati Fund of the Jerusalem School of
Business Administration.
{\tt michal.feldman@huji.ac.il}}
\and
Iftah Gamzu\thanks{
Microsoft Research.
{\tt iftah.gamzu@cs.tau.ac.il}}
\and
Renato Paes Leme\thanks{
Cornell University.
Supported by a Microsoft Research  	Fellowship.
{\tt renatoppl@cs.cornell.edu}}
\and
Moshe Tennenholtz\thanks{
Microsoft Research and Technion.
{\tt moshet@microsoft.com}}
}

\date{}

\maketitle

\begin{abstract}
Signaling is an important topic in the study of asymmetric information in
economic settings.
In particular, the transparency of information available to a seller in an
auction setting is a question of major interest.
We introduce the study of signaling when conducting a second price auction of
a probabilistic good whose actual instantiation is known to the auctioneer but
not to the bidders.
This framework can be used to model impressions selling in display
advertising.
We establish several results within this framework.
First, we study the problem of computing a signaling scheme that maximizes the
auctioneer's revenue in a Bayesian setting.
We show that this problem is polynomially solvable for some interesting
special cases, but computationally hard in general.
Second, we establish a tight bound on the minimum number of signals required
to implement an optimal signaling scheme.
Finally, we show that at least half of the maximum social welfare can be
preserved within such a scheme.
\end{abstract}

\renewcommand{\thepage}{}
\clearpage
\pagenumbering{arabic}

\section{Introduction}
\label{section:Introduction}
A major concern in market design is to ensure that the markets are
\emph{thick} in the sense that there is a sufficient volume of participants to
produce the necessary level of competition for the market to work well.
Another concern is to design a practical language that is sufficiently
expressive to allow players to specify how much they value the goods in the
market.
In the market for diamonds described by Levin and
Milgrom \cite{LevinMilgrom2010} for example,
the auctioneers could elicit bids for
each individual stone.
However, the enormous effort required for the players to learn the value of
each individual stone and to submit individual bids would make the auction
impractical.
Moreover, bidding on each stone separately can lead to the
\emph{cherry-picking} phenomenon, where very few customers are interested in
any one stone.
This may lead to a situation where little revenue is generated although the
goods are valuable.
In practice, stones are categorized into \emph{deals} and then auctioned.
This method of treating different goods as identical is called
\emph{conflation}.

Milgrom \cite{Milgrom2009} and Levin and Milgrom \cite{LevinMilgrom2010}
provide a comprehensive analysis of the phenomenon of \emph{conflation} in
various markets, with particular emphasis on online advertisement.
In these markets, auctioning each good individually is usually not an option
and conflation must be used.
One particular online market that we will focus on is the multibillion-dollar
display advertisement market, where publishers (such as MSN and Yahoo) attempt
to maximize the revenue they collect from the advertisers (say, Nike or
Coca-Cola) for wisely targeting their ads at the right users.
For example, an ad referring to the surfing lifestyle on the sunny beaches of
the Pacific Ocean may be most valuable when targeted at a teenager from
California;
perhaps less so when targeted at a 10 year old from Oregon;
and even less when targeted at older folks in areas that are far from the
ocean.
However, it would be impossible for advertisers to decide how to bid on each
individual impression.
Instead, the impressions are categorized based on attributes such as the time
when the impression was made, cookies in the user's browser, certain
demographic properties, geographic location, etc.;
impressions with similar attributes are then treated as instances of the same
good.

In the context of display advertisement, the main question we deal with is:
How should those impressions be categorized in order to maximize the
publisher's revenue?
The high-level idea behind our model is to explore the natural asymmetry of
information between the publisher and the advertisers:
while advertisers may know the distribution of users visiting a particular
site, the publisher usually has much more accurate information about each
individual impression.\footnote{
In reality, the additional information about the visitor to a site is often
handled by third party demand side platform (usually refered as DSP) .
For simplicity, we abstract away this distinction.
}
Upon receiving an impression, the publisher may choose to reveal to the
advertisers certain attributes of this impression (say, age and gender), while
concealing other attributes (say, geographic location).
One may argue that concealing information from the advertisers might
generate inefficiencies in the market, but the amount of information is
typically so large that it would be impossible for the advertisers to grasp
everything anyhow.
More importantly, ensuring that revenue is generated is essential for the
proper functioning of markets.
As described by Muthukrishnan \cite{Muthukrishnan09}, in Ad Exchanges, which
are systems that bring together publishers and advertisers in a common
marketplace, ensuring good revenue is vital to keeping publishers in the market.

More generally, our goal in this paper is to cope with the undesired effects
of competition deficiency on some items in an auction.
To achieve this goal, we will exploit an inherent information asymmetry
between the auctioneer and the bidders that exists in many market settings.
We model\footnote{
For the formal exposition of our model, see Section~\ref{section:Model}.
} the auctioneer/bidders asymmetry by considering a framework termed a
\emph{probabilistic single-item auction}, in which $n$ bidders participate in
an auction for a single item, which is chosen randomly from a set
of $m$ indivisible goods according to a commonly known probability
distribution $p \in \Delta(m)$.
In contrast to the bidders, who know only the probability distribution over
the possible goods, the auctioneer knows its actual realization, and can use
this informational superiority to increase the collected revenue.

Specifically, the auctioneer may choose to reveal partial information to the
bidders by means of a \emph{signaling scheme}.
A signaling scheme is a (possibly randomized) policy that specifies some signal
$\sigma$ revealed to the bidders upon the choice (made by nature) of each good
$j \in [m]$.
This policy is known to the bidders who can therefore induce the revealed
signal $\sigma$ to update their perceived probability for the chosen good $j$
from $p(j)$ to the ``more accurate'' $p(j \mid \sigma)$.


One approach would be design an optimal auction from scratch for the problem of
maximizing revenue in a probabilistic single item auction setting. Since we are
in a setting with very correlated values, this is likely to produce contorted
and impractical auctions in the style of the auction of Cremer and McLean
\cite{cremermclean85, cremermclean88}, where full surplus extraction is
possible. Instead, we focus on the standard second price auction mechanism
which is the de-facto standard for the sale of online
advertisement \cite{Varian,Edelman,Muthukrishnan09}.
We believe this will generate an auction that is more relevant to practical
applications and that can be easily integrated with the current implementations.

In this auction, after the bidders receive the signal, they submit their bids,
and the winner and the payment are determined according to the second-price
auction; namely, the winner is the bidder with the highest bid and the payment
is the second-highest bid. The goal of the auctioneer, which is the subject of
this paper, is to design a signaling scheme that maximizes her expected revenue.

A simple but crucial observation that facilitates our analysis is that, similar
to the classical setting of second-price auctions, here too, it is a
dominant strategy for the bidders to reveal their true expected valuations,
where the expectation in this context is taken with respect to the conditional
probability $p(j \mid \sigma)$.
Therefore, the problem, termed \emph{revenue maximization by signaling},
reduces to finding a signaling scheme that maximizes the expected
second-highest bid (amounting to the expected revenue).

Two trivial signaling schemes are the one that reveals no information to the
bidders and the one that reveals the actual realization (all the information).
Interestingly, there are instances in which an appropriate signaling scheme
provides a substantial improvement over the two trivial ones.
This can be demonstrated already through a special case of a signaling scheme,
termed \emph{clustering}:
The auctioneer a-priori partitions the set of goods into disjoint clusters,
and the signal is the cluster that contains the chosen good.
Consider the case in which there are $m$ bidders and $m$ types of goods, an
item is chosen uniformly at random, and each bidder $i$ is only interested in
good $i$ with a unit valuation.
If no information is revealed then the expected revenue is $1 / m$ as the
expected valuation of each bidder is $1 / m$.
If the actual realization is revealed, no revenue is collected since for every
realization, the second-highest valuation is $0$.
However, if the goods are partitioned into clusters of size $2$, then the
expected revenue is $1 / 2$, providing an improvement of a linear factor over
the best trivial scheme.

Note that clustering schemes can be thought of as restricting the auctioneer
to deterministic policies.
The class of signaling schemes considered in this paper is more general than
clustering as we allow the auctioneer to toss coins when deciding on the
revealed signal.

\subsection*{Our Results}
We begin our analysis assuming that the valuations of the bidders are
known to the auctioneer.
In this somewhat less realistic case, the problem of revenue maximization by
signaling can be formalized as a concise linear program and, as such, solved
to optimality in polynomial time.
A natural question is to what extent the increase in revenue comes at the
expense of social welfare.
Notably, we prove that a signaling scheme that obtains the optimal revenue can
preserve at least half of the optimal social welfare.
In addition, it is shown that if the auctioneer is restricted to invoking a
signaling scheme by means of clustering, at least half of the optimal revenue
can be achieved, and this is tight.

Up until now we have assumed that the valuations of the bidders are known to
the auctioneer.
However, in practice, the auctioneer rarely knows the bidders' valuations.
This motivates the main technical contribution of this paper, namely, the
study of a \emph{Bayesian} setting, in which the auctioneer holds 
probabilistic knowledge on the bidders' valuations.
We show that in this case the revenue maximization by signaling problem
becomes NP-hard.
Still, in several cases of interest the problem remains tractable even in
the Bayesian setting.
Finally, we show that $m$ signals are always sufficient to extract the optimal
revenue.
It is an interesting open problem how to find good approximation algorithms for
the cases where the revenue maximization is NP-hard or to prove hardness of
approximation in those cases. 

Notice that our model captures the Bayesian knowledge on behalf of the
auctioneer by assuming a probability distribution over finitely many valuation
matrices.
This representation can capture complicated dependencies between the different
valuations, however it may exhibit plenty of redundancy when the valuations
are assumed to be independent.
As such, it will be interesting to study our framework under more concise
representations -- for example, where each entry of the matrix is sampled
independently from some distribution.

\subsection*{Related Work}
There is a rich theory on markets with information asymmetry.
In such markets, agents on one side have more (or better) information than
those on the other side.
The foundation of this theory dates back to the work of Akerlof, Spence, and
Stiglitz on the analysis of markets with asymmetric information, which
earned them the 2001 Nobel Prize.
In particular, Akerlof \cite{Akerlof70} introduced the first formal analysis of
markets in which sellers have more information than buyers regarding the
quality of products.
Spence \cite{Spence73,Spence02} demonstrated that in certain settings,
well-informed agents can improve their outcome by signaling their private
information to poorly informed agents.

There is also a vast literature on the nature and effects of information
revelation in auctions.
One of the most fundamental results in auction theory, namely the ``Linkage
Principle'' of Milgrom and Weber \cite{MilgromWeber}, states that the expected
revenue of an auctioneer is enhanced when bidders are provided with more
information.
While this work advocates transparency in various markets, later work observed
that such  transparency may not be optimal in general (see,
e.g.,~\cite{PerryReny99,Krishna02,FoucaultL03,FeinbergT05}).
More recent work \cite{Milgrom2009,LevinMilgrom2010} advocated the need for
careful grouping of goods as an important market design principle.
Our work may be viewed as a study of information revelation through
an optimization lens, since we seek to maximize the expected revenue of an
auctioneer by designing an effective information revelation scheme.

Notice that Myerson's classic result on revenue maximization \cite{myerson}
does not apply to our model due to the asymmetry of information.
Also, Myerson's mechanism works only for single parameter settings.
Our Bayesian models are multi-parameter and are typically highly correlated.
Revenue maximization results for correlated valuations \cite{pp, dfk} also do
not
apply here due to asymmetric information.
One could try to reveal all information and then apply one of those
mechanisms, but we would get a rather contorted auction, with no guarantees
against our auction.
In fact, it is easy to construct examples where this generates arbitrary less
revenue than our signaling scheme.
Our auction, on the other hand, is very practical and close to what is
actually implemented in online advertising markets.

Closer to our work is that of Ghosh et al \cite{ghosh}, which studies
revenue-maximizing clustering schemes under a second-price auction in a
setting with full information and additive valuations.
While this setting is different from our framework of signaling in
a probabilistic item auction, the mathematical formulation of the optimization
problem in their setting is a special case of our optimization
problem, i.e., the case where the valuation matrix is known to the auctioneer
and the signaling scheme is restricted to take the form of a clustering
scheme.
Our focus, though, is on the more realistic Bayesian case which
is not not treated in \cite{ghosh}.
In addition, our framework relies on signaling that can be viewed as a
``fractional'' clustering, which is more powerful.
Indeed, while Ghosh et al. show that it is strongly NP-hard to compute the
optimal clustering scheme, an optimal signaling scheme can be computed in
polynomial time.
Results of the similar flavour of the ones in \cite{ghosh} were re-derived
independently in a previous version of the current paper \cite{Gamzu_aaw}.

Independently of our work, Miltersen and Sheffet \cite{Miltersen_Sheffet} also
analyze the problem of
obtaining optimal signaling schemes for revenue maximization using Linear
Programming, obtaining a result similar to our Theorem \ref{theorem:OptimalLp}.

\section{The Model}
\label{section:Model}
In this section, we introduce the auctioning model on top of which our
signaling schemes are defined.
Our focus in this paper is on a Bayesian setting, treating the uncertainty of
the auctioneer regarding the bidders' valuations in a probabilistic manner.
For clarity of the exposition, we shall first consider the (less realistic)
\emph{known-valuations} setting, where no such uncertainty is assumed.

\subsection*{Known-Valuations Probabilistic Single-Item Auctions}
A \emph{known-valuations probabilistic single-item auction (KPSA)}
$\mathcal{A}$ is formally depicted by the four-tuple
$$
\mathcal{A} = \left\langle n, m, p, V \right\rangle \ ,
$$
where $n \in \Integers_{> 0}$ stands for the number of \emph{bidders},
$m \in \Integers_{> 0}$ stands for the number of distinct indivisible
\emph{goods},
$p \in \Delta(m)$ is a probability distribution over the goods, and
$V \in \Reals_{\geq 0}^{n \times m}$ is a non-negative real matrix capturing
the \emph{valuation} $V(i, j)$ of bidder $i$ for good $j$.
A single good $j \in [m]$ is chosen (by nature) according to the distribution
$p$ which is a common knowledge.

The auction is conducted according to the \emph{second-price} rule:
Each player $i$ places her bid $b_i$ and the chosen good $j$ is sold to the
bidder that placed the highest bid $\Max_{i \in [n]} \{ b_i \}$ (ties are broken
arbitrarily) for the price of the second highest bid $\SecondMax_{i \in [n]}
\{ b_i \}$.

\subsection*{Signaling Schemes}
Although the bidders know the distribution $p$, they do not know its actual
realization which is observed only by the \emph{auctioneer}.
In an attempt to increase her expected \emph{revenue}, the auctioneer may
partially reveal the realization $j \in [m]$ of $p$ to the bidders.
This partial revelation is carried out by means of \emph{signaling}:
given that the chosen good is $j$ (recall that this choice is made by nature),
the auctioneer sends the bidders some signal $\sigma$;
the bidders then hold a ``more accurate picture'' of the chosen good that
corresponds to the probability distribution $p$ conditioned on $\sigma$.
The policy that dictates the signal that the auctioneer reveals to the bidders
for each good $j \in [m]$ is referred to as a \emph{signaling scheme}.
It is important to point out that this policy is decided by the auctioneer and
reported to the bidders prior to nature's random choice of item $j$.

More formally, a signaling scheme is given by a set of $s \in \Integers_{>
0}$ \emph{signals} and a \emph{signaling function} $\SignalFunc : [s] \times
[m] \rightarrow [0, 1]$ that satisfies
\begin{equation} \label{equation:ValidSignalingScheme}
\sum_{\sigma \in [s]} \SignalFunc(\sigma, j) = 1
\quad
\forall j \in [m] \ .
\end{equation}
Given that nature chose good $j \in [m]$, the auctioneer reveals signal
$\sigma \in [s]$ to the bidders with probability $\SignalFunc(\sigma, j)$.
It will be convenient to use the notation $\SignalFunc$ to address the
signaling scheme as well as its inherent signaling function.

Once again, it is assumed that $\SignalFunc$ (and $s$) are decided by the
auctioneer and reported to the bidders prior to the random choice of $j$;
it is the actual signal $\sigma$ (determined according to $\SignalFunc$) that
is revealed to the bidders after the choice of $j$.
(This can be thought of as a commitment of the auctioneer to stick to the
signaling scheme that it previously reported.)
Upon receiving signal $\sigma$, the bidders, knowing $p$ and $\SignalFunc$,
update their belief from $\Probability(\text{chosen good is $j$}) = p(j)$ to
\begin{align*}
\Probability(\text{chosen good is $j$} \mid \text{signal is $\sigma$})
~ = ~ &
\frac{\Probability(\text{signal is $\sigma$} \mid \text{chosen good is $j$})
\cdot \Probability(\text{chosen good is $j$})}{\Probability(\text{signal is
$\sigma$})} \\
= ~ &
\frac{\SignalFunc(\sigma, j) \cdot p(j)}{\sum_{j' \in [m]} \SignalFunc(\sigma,
j') \cdot p(j')} \ .
\end{align*}
For succinctness, we will subsequently denote the events ``$\text{chosen good is
$j$}$'' and ``$\text{signal is $\sigma$}$'' by $j$ and $\sigma$,
respectively (our intention will be clear from the context).

Before we proceed, let us consider the restricted variant of a
\emph{deterministic auctioneer} which is not allowed to use randomness when
determining which signal to reveal.
This is equivalent to imposing an additional ``integrality'' requirement on
the signaling scheme:
$\SignalFunc(\sigma, j) \in \{0, 1\}$ for every $\sigma \in [s]$ and $j \in
[m]$.
In other words, each signal $\sigma \in [s]$ now corresponds to a
\emph{cluster} $C_{\sigma} \subseteq [m]$ so that the clusters are pairwise
disjoint and $\bigcup_{\sigma \in [s]} C_{\sigma} = [m]$.
Following this view, the general case (under which the auctioneer may use
randomness when determining the signal) can be interpreted as a fractional
clustering of the goods, where $\SignalFunc(\sigma, j)$ is the fraction of
good $j$ in cluster $C_{\sigma}$.

It is well known that in the classical setting of second-price single-item
auctions, it is a dominant strategy for the bidders to be truthful, i.e., to
bid their true valuations~\cite{vickery1961}.
It turns out that this remains valid in probabilistic single-item auctions
under signaling as well, as the following observation demonstrates
(proof deferred to the appendix).

\begin{observation} \label{observation:SecondPrice}
For every $i \in [n]$ and $\sigma \in [s]$, bidding $b_i(\sigma) =
\Expectation[V(i, j) \mid \sigma]$ in response to the signal $\sigma$
is a dominant strategy for bidder $i$.
\end{observation}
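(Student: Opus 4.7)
The plan is to reduce the claim to the classical Vickrey dominant-strategy argument for a single-item second-price auction, by conditioning on the revealed signal $\sigma$. The intuition is that once $\sigma$ has been observed, the only random quantity that affects bidder $i$'s utility, beyond her own bid, is the pair $(j, b_{-i})$; and because the other bidders only observe the same signal $\sigma$, their bids are independent of the realized good $j$ given $\sigma$. This will allow the random valuation $V(i,j)$ to be replaced by its conditional expectation inside the utility expression.

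Concretely, I would fix bidder $i$, a signal $\sigma \in [s]$, and an arbitrary (possibly randomized) bid profile $b_{-i}$ for the remaining bidders. Writing $w = \Max_{k \neq i} \{ b_k \}$ for the highest competing bid and $v_i^{\sigma} = \Expectation[V(i, j) \mid \sigma]$, the expected utility to bidder $i$ from bidding $b_i$, conditional on $\sigma$, equals $\Expectation\bigl[(V(i, j) - w) \cdot \one[b_i > w] \mid \sigma\bigr]$ (with ties broken arbitrarily). Since $w$ is a function of $\sigma$ and of the other bidders' internal randomness, and in particular is independent of $j$ given $\sigma$, I can condition first on $w$ and invoke linearity of expectation to reduce this expression to $\Expectation\bigl[(v_i^{\sigma} - w) \cdot \one[b_i > w] \mid \sigma\bigr]$.

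Once this reduction is performed, the problem becomes a deterministic second-price auction in which bidder $i$'s value is $v_i^{\sigma}$ and she faces a competing bid $w$. The standard Vickrey argument then yields the conclusion: bidding $v_i^{\sigma}$ wins exactly when winning is profitable (i.e., $w < v_i^{\sigma}$, securing the surplus $v_i^{\sigma} - w$) and loses exactly when winning would be unprofitable (i.e., $w > v_i^{\sigma}$, yielding zero), so $b_i(\sigma) = v_i^{\sigma}$ weakly dominates every alternative bid against every $b_{-i}$. The only nontrivial step is the measurability/independence check that legitimizes the substitution $V(i,j) \mapsto v_i^{\sigma}$; this follows immediately from the fact that each bidder's strategy is a function of the single signal she observes, after which the proof reduces to the one in \cite{vickery1961}.
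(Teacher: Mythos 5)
Your proposal is correct and takes essentially the same route as the paper: both fix the opponents' strategies, observe that their bids depend only on $\sigma$ (hence are independent of $j$ given $\sigma$), replace $V(i,j)$ by $\Expectation[V(i,j)\mid\sigma]$ via linearity, and then invoke the standard Vickrey case analysis for a deterministic second-price auction with that value. Your version merely makes the substitution/independence step more explicit than the paper's, which states it informally as ``from bidder $i$'s point of view, the expected valuation of the chosen good is $b_i(\sigma)$.''
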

\newcommand{\ProofObservationSecondPrice}{
Consider the known-valuations \emph{ex-ante} game defined by setting the
strategy space of each bidder $i \in [n]$ to be the collection of all possible
functions $b_i : [s] \rightarrow \Reals_{\geq 0}$ and the utility of bidder
$i$ from strategy profile $b = (b_1, \dots, b_n)$ to be the expected utility
of bidder $i$ in the KPSA $\mathcal{A}$ assuming that each bidder adheres to
$b$.
Fix some (arbitrary) strategies $b_{i'} : [s] \rightarrow \Reals_{\geq 0}$ for
all bidders $i' \neq i$ and consider the strategy $b_i$ of bidder $i$ that
bids
$$
b_{i}(\sigma) =
\Expectation \left[ V(i, j) \mid \sigma \right] =
\sum_{j \in [m]} \Probability(j \mid \sigma) \cdot V(i, j) =
\sum_{j \in [m]} \frac{\SignalFunc(\sigma, j) \cdot p(j)}{\sum_{j' \in [m]}
\SignalFunc(\sigma, j') \cdot p(j')} \cdot V(i, j)
$$
in response to each signal $\sigma \in [s]$.

Fix some signal $\sigma$.
From bidder $i$'s point of view, the expected valuation of the chosen good is
$b_{i}(\sigma)$, whereas each other bidder $i' \neq i$, bids $b_{i'}(sigma)$.
If bidder $i$ does not win the chosen good, which happens only if $\Max_{i'
\neq i} \{ b_{i'}(\sigma) \} \geq b_{i}(\sigma)$, then her utility in the
ex-ante game is $0$.
This can be changed only if bidder $i$ increases her bid so that it exceeds
$\Max_{i' \neq i} \{ b_{i'}(\sigma) \}$, but this imposes a negative utility on
$i$.
So, assume that $\Max_{i' \neq i} \{ b_{i'}(\sigma) \} \leq b_{i}(\sigma)$ and
bidder $i$ does win the chosen good.
By the definition of the second-price rule, the utility of $i$ must be
non-negative.
Clearly, bidder $i$ has no incentive to increase her bid.
Decreasing her bid does not change her utility as long as it still exceeds
$\Max_{i' \neq i} \{ b_{i'}(\sigma) \}$;
decreasing her bid further resets her utility to zero.
The assertion follows.
} 

\subsection*{Optimization Problems}
Consider some KPSA $\mathcal{A} = \langle n, m, p, V \rangle$ and signaling
scheme $\SignalFunc$.
In light of Observation~\ref{observation:SecondPrice}, we subsequently assume
that the bidders are indeed truthful, that is, bidder $i$ bids
$\Expectation[V(i, j) \mid \sigma] = \sum_{j \in [m]} \Probability(j \mid \sigma)
\cdot V(i, j)$ in response to the signal $\sigma$.
Therefore, the (expected) \emph{revenue} of the auctioneer, denoted
$\Revenue_{\mathcal{A}}(\SignalFunc)$, is given by
$$
\Revenue_{\mathcal{A}}(\SignalFunc)
~ = ~
\sum_{\sigma \in [s]} \Probability(\sigma) \cdot \SecondMax_{i \in [n]}
\left\{ \sum_{j \in [m]} \Probability(j \mid \sigma) \cdot V(i, j) \right\} \ .
$$
This raises the following optimization problem, referred to as the
\emph{revenue maximization by signaling (RMS) problem}:
given a KPSA $\mathcal{A}$, construct the signaling scheme $\SignalFunc$
that maximizes $\Revenue_{\mathcal{A}}(\SignalFunc)$.
One may also be interested in the (expected) \emph{social welfare} resulting
from signaling scheme $\SignalFunc$, defined as
$$
\SocialWelfare_{\mathcal{A}}(\SignalFunc)
~ = ~
\sum_{\sigma \in [s]} \Probability(\sigma) \cdot \Max_{i \in [n]}
\left\{ \sum_{j \in [m]} \Probability(j \mid \sigma) \cdot V(i, j) \right\} \ .
$$
When $\mathcal{A}$ is clear form the context, we may omit it from the
subscript and write simply $\Revenue(\SignalFunc)$ and
$\SocialWelfare(\SignalFunc)$.

Notice that the revenue of the auctioneer can be rewritten as
\begin{align}
\Revenue(\SignalFunc)
~ = ~ &
\sum_{\sigma \in [s]} \Probability(\sigma) \cdot \SecondMax_{i \in [n]}
\left\{ \sum_{j \in [m]} \frac{\Probability(\sigma \mid j) \cdot
\Probability(j)}{\Probability(\sigma)} \cdot V(i, j) \right\}
\nonumber \\
= ~ &
\sum_{\sigma \in [s]} \SecondMax_{i \in [n]} \left\{ \sum_{j \in [m]}
\SignalFunc(\sigma, j) \cdot p(j) \cdot V(i, j) \right\} \nonumber \\
= ~ &
\sum_{\sigma \in [s]} \SecondMax_{i \in [n]} \left\{ \sum_{j \in [m]}
\SignalFunc(\sigma, j) \cdot \Psi(i, j) \right\} \
, \label{equation:AlternativeRevenueDefinition}
\end{align}
where $\Psi(i, j) = p(j) \cdot V(i, j)$ is referred to as the
\emph{normalized valuation} of bidder $i$ for item $j$.
Following the same line of arguments, we can also rewrite the social welfare as
$$
\SocialWelfare(\SignalFunc)
~ = ~
\sum_{\sigma \in [s]} \Max_{i \in [n]} \left\{ \sum_{j \in [m]}
\SignalFunc(\sigma, j) \cdot \Psi(i, j) \right\} \ .
$$

Under the deterministic auctioneer requirement, the RMS problem turns into
the following clustering problem:
Given the normalized valuation matrix $\Psi \in \Reals_{\geq 0}^{n \times m}$,
devise a pairwise disjoint partition of $[m]$ into clusters $\{ C_{\sigma}
\}_{\sigma \in [s]}$ that maximizes
$$
\Revenue \left( \left\{ C_{\sigma} \right\}_{\sigma \in [s]} \right) =
\sum_{\sigma \in [s]} \SecondMax_{i \in [n]} \left\{ \sum_{j \in C_{\sigma}}
\Psi(i, j) \right\} \ .
$$

\subsection*{A Bayesian Setting}
Recall that up until now, we assumed that the valuations of the bidders are
known to the auctioneer.\footnote{
In some sense, we also assumed that the valuations of each bidder are known to
the other bidders.
However, Observation~\ref{observation:SecondPrice} implies that
this does not matter:
a bidder is better off bidding its true (expected) valuation regardless of the
strategies of the other bidders.
}
However, in many practical scenarios the auctioneer does not know the exact
valuation of each bidder.
To tackle this obstacle, we assume a \emph{Bayesian} setting, treating the
state of knowledge that the auctioneer holds on the bidders' valuations in a
probabilistic manner.
This is captured in our model by considering $k \in \Integers_{> 0}$ distinct
valuation matrices $V_1, \dots, V_k \in \Reals_{\geq 0}^{n \times m}$ and a
probability distribution $q \in \Delta(k)$ associating each valuation matrix
$V_{\ell}$ with the probability $q(\ell)$ that it occurs.
A \emph{probabilistic single-item auction (PSA)} is then depicted by the
$6$-tuple
$$
\mathcal{A} = \left\langle n, m, k, p, q, \{ V_{\ell} \}_{\ell \in [k]}
\right\rangle \ ,
$$
where $n \in \Integers_{> 0}$, $m \in \Integers_{> 0}$, and $p \in \Delta(m)$
have the same role as in the known-valuations case; and
$k \in \Integers_{> 0}$, $q \in \Delta(k)$, and $\left\{ V_{\ell} \in \Reals_{\geq
0}^{n \times m} \right\}_{\ell \in [k]}$ capture the aforementioned Bayesian
angle.

The expected revenue of the auctioneer from
the signaling scheme $\SignalFunc$ is now defined to be
\begin{align*}
\Revenue_{\mathcal{A}}(\SignalFunc)
~ = ~ &
\sum_{\ell \in [k]} q(\ell)
\sum_{\sigma \in [s]}
\Probability(\sigma) \cdot \SecondMax_{i \in [n]}
\left\{ \sum_{j} \Probability(j \mid \sigma) \cdot V_{\ell}(i, j) \right\} \\
= ~ &
\sum_{\ell \in [k]} q(\ell)
\sum_{\sigma \in [s]}
\SecondMax_{i \in [n]} \left\{ \sum_{j}
\SignalFunc(\sigma, j) \cdot \Psi_{\ell}(i, j) \right\} \ ,
\end{align*}
where $\Psi_{\ell}(i, j) = p(j) \cdot V_{\ell}(i, j)$ and the last equation
follows from the same line of arguments that was used to establish
(\ref{equation:AlternativeRevenueDefinition}).

\section{Optimal Signaling Schemes --- The Known-Valuations Case}
\label{section:OptimalSchemesKnownValuations}
Let us start our technical treatment of signaling schemes with (the simpler)
known-valuations setting, considering a KPSA $\mathcal{A} = \langle n, m, p, V
\rangle$.
We show that an optimal signaling scheme for $\mathcal{A}$ can be obtained by
solving an LP with $O (n^2 m)$ variables and $O (n^2 + m)$ constraints
(excluding the non-negativity constraints).

Given an $s$-signal signaling scheme $\SignalFunc$ for $\mathcal{A}$ and a
signal $\sigma \in [s]$, let $h_{1}^{\SignalFunc}(\sigma)$ and
$h_{2}^{\SignalFunc}(\sigma)$ denote the bidders $i$ that realize
$\Max_{i \in [n]} \left\{ \sum_{j \in [m]} \SignalFunc(\sigma, j) \cdot
\Psi(i, j) \right\}$ and
$\SecondMax_{i \in [n]} \left\{ \sum_{j \in [m]} \SignalFunc(\sigma, j) \cdot
\Psi(i, j) \right\}$, respectively.
(When the signaling scheme $\SignalFunc$ is clear from the context, we may
omit the superscripts.)
Our concise LP relies on the following observation.

\begin{observation} \label{observation:FewSignals}
There exists an optimal $s$-signal signaling scheme $\SignalFunc$ for
$\mathcal{A}$ such that given $\sigma, \sigma' \in [s]$, if
$h_{1}^{\SignalFunc}(\sigma) = h_{1}^{\SignalFunc}(\sigma')$ and
$h_{2}^{\SignalFunc}(\sigma) = h_{2}^{\SignalFunc}(\sigma')$, then $\sigma =
\sigma'$.
\end{observation}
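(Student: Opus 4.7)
The plan is a merging argument. Starting from any optimal $s$-signal signaling scheme $\SignalFunc$, I would show that whenever two distinct signals $\sigma \neq \sigma'$ induce the same pair $(h_1,h_2)$ of top and runner-up bidders, they can be combined into a single signal without decreasing the auctioneer's revenue; iterating yields an optimal scheme satisfying the claim.

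Concretely, suppose $h_1^{\SignalFunc}(\sigma)=h_1^{\SignalFunc}(\sigma')=a$ and $h_2^{\SignalFunc}(\sigma)=h_2^{\SignalFunc}(\sigma')=b$. I would define a new scheme $\SignalFunc'$ over the signals $[s]\setminus\{\sigma'\}$ by setting $\SignalFunc'(\sigma,j)=\SignalFunc(\sigma,j)+\SignalFunc(\sigma',j)$ for every $j\in[m]$ and leaving the remaining entries unchanged. Validity, i.e.\ condition (\ref{equation:ValidSignalingScheme}), is immediate from linearity of the column sums. The real content is then to check that $a$ and $b$ remain the two top bidders under the merged signal.

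The key step exploits the linearity of each bidder's ``normalized'' bid in the signaling function. For any bidder $i$,
\[
\sum_{j\in[m]}\SignalFunc'(\sigma,j)\cdot\Psi(i,j)
=\sum_{j\in[m]}\SignalFunc(\sigma,j)\cdot\Psi(i,j)+\sum_{j\in[m]}\SignalFunc(\sigma',j)\cdot\Psi(i,j).
\]
Since $a$ maximizes the bid in each of $\sigma$ and $\sigma'$ under $\SignalFunc$, the two defining inequalities add up, so $a$ still maximizes the bid in the merged signal under $\SignalFunc'$. Applying the same addition argument to $b$, restricted to bidders $i\ne a$, shows $b$ is still the runner-up. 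Consequently the $\SecondMax$ contribution of the merged signal under $\SignalFunc'$ equals the bid of $b$ there, which by the display above is the sum of the $\SecondMax$ contributions of $\sigma$ and $\sigma'$ under $\SignalFunc$. Invoking formula (\ref{equation:AlternativeRevenueDefinition}) gives $\Revenue(\SignalFunc')=\Revenue(\SignalFunc)$, so $\SignalFunc'$ is still optimal.

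Finally, each merging step reduces the number of signals by one, so the process terminates after at most $s-1$ iterations and produces an optimal scheme in which distinct signals carry distinct $(h_1,h_2)$ pairs. The only delicate point I anticipate is that $h_1,h_2$ are argmax operators and the inequalities above are weak, but fixing any consistent tie-breaking rule (e.g.\ lexicographic on bidder index) makes the chosen $a$ and $b$ valid top-two bidders for the merged signal as well, so this does not pose a real obstacle.
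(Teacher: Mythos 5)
Your proposal is correct and follows essentially the same route as the paper: merge any two signals sharing the same $(h_1,h_2)$ pair into one by adding their columns, use linearity of the bids to see that the same two bidders remain on top so the merged signal's contribution equals the sum of the originals', and iterate (the paper phrases the iteration as taking an optimal scheme minimizing $s$ and deriving a contradiction, which is the same argument). Your explicit justification that $a$ and $b$ remain the top two bidders, and the remark on tie-breaking, fill in details the paper leaves implicit.
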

\begin{proof}
Consider an optimal $s$-signal signaling scheme $\SignalFunc$ that minimizes
$s$.
We argue that $\SignalFunc$ must satisfy the assertion.
To that end, assume by contradiction that there are two distinct signals
$\sigma, \sigma' \in [s]$ such that $h_{1}^{\SignalFunc}(\sigma) =
h_{1}^{\SignalFunc}(\sigma') = i_1$ and $h_{2}^{\SignalFunc}(\sigma) =
h_{2}^{\SignalFunc}(\sigma') = i_2$.
Let $\SignalFunc^*$ be the $(s - 1)$-signal signaling scheme obtained from
$\SignalFunc$ by replacing both signals $\sigma$ and $\sigma'$ by a new signal
$\sigma^*$ defined by setting $\SignalFunc(\sigma^*, j) = \SignalFunc(\sigma,
j) + \SignalFunc(\sigma', j)$ for every $j \in [m]$.

It is easy to verify that $\SignalFunc^*$ is valid in terms of
(\ref{equation:ValidSignalingScheme}).
Moreover, since $h_{1}^{\SignalFunc^*}(\sigma^*) = i_1$ and
$h_{2}^{\SignalFunc^*}(\sigma^*) = i_2$, we can use
(\ref{equation:AlternativeRevenueDefinition}) to conclude that the combined
contribution of $\sigma$ and $\sigma'$ to $\Revenue(\SignalFunc)$ is
$$
\sum_{j \in [m]} \left( \SignalFunc(\sigma, j) + \SignalFunc(\sigma', j)
\right) \cdot \Psi(i_2, j)
~ = ~
\sum_{j \in [m]} \SignalFunc^*(\sigma^*, j) \cdot \Psi(i_2, j)
$$
which is precisely the contribution of $\sigma^*$ to
$\Revenue(\SignalFunc^*)$.
Thus, $\Revenue(\SignalFunc) = \Revenue(\SignalFunc^*)$, in contradiction to
the minimality of $s$.
\end{proof}

A direct corollary of Observation~\ref{observation:FewSignals} is that it
suffices to consider signaling schemes with $s = n (n - 1)$ signals --- each
signal $\sigma$ is uniquely identified by $h_{1}(\sigma)$ and
$h_{2}(\sigma)$.
This turns out to be asymptotically tight as there are examples showing that
$\Omega (n^2)$ signals are required to implement an optimal signaling scheme
(see \Sec{}~\ref{section:BoundNumberSignals}).
Based on Observation~\ref{observation:FewSignals} and on the formulation of
revenue in (\ref{equation:AlternativeRevenueDefinition}), we can construct an
optimal signaling scheme $\SignalFunc$ by solving the following linear
program, denoted \MainLp{}:
\begin{align*}
\max ~ & \sum_{i_1, i_2 \in [n], i_1 \neq i_2} R(\sigma_{i_1, i_2}) ~
\text{s.t.} \\
& R(\sigma_{i_1, i_2}) ~ \leq ~ \sum_{j \in [m]} \SignalFunc(\sigma_{i_1,
i_2}, j) \cdot \Psi(i_1, j) \quad \forall i_1, i_2 \in [n], i_1 \neq i_2 \\
& R(\sigma_{i_1, i_2}) ~ = ~ \sum_{j \in [m]} \SignalFunc(\sigma_{i_1,
i_2}, j) \cdot \Psi(i_2, j) \quad \forall i_1, i_2 \in [n], i_1 \neq i_2 \\
& \sum_{i_1, i_2 \in [n], i_1 \neq i_2} \SignalFunc(\sigma_{i_1, i_2}, j) ~ =
~ 1 \quad \forall j \in [m] \\
& \SignalFunc(\sigma_{i_1, i_2}, j) ~ \geq ~ 0 \quad \forall i_1, i_2 \in [n],
i_1 \neq i_2, \forall j \in [m] \ .
\end{align*}

Since \MainLp{} consists of $O (n^{2} m)$ variables and $O (n^{2} +
m)$ constraints (excluding the non-negativity constraints), it can be solved
in polynomial time.

\begin{theorem} \label{theorem:OptimalLp}
Under the known-valuations setting, the RMS problem can be solved in
polynomial time.
\end{theorem}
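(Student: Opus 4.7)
The plan is to prove the theorem by showing that the linear program \MainLp{} exactly captures the maximum expected revenue over all signaling schemes, and then invoking polynomial-time LP solvability. The reduction to this LP is justified by Observation~\ref{observation:FewSignals}, which lets me restrict attention to signaling schemes with at most $n(n-1)$ signals, each signal identified by an ordered pair $(i_1, i_2)$ of distinct bidders intended to serve as the highest and second-highest bidder, respectively, under that signal.

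First I would verify that the value of \MainLp{} is a lower bound on the optimal revenue. Starting from an optimal signaling scheme whose signals have been pruned as in Observation~\ref{observation:FewSignals}, I relabel each signal $\sigma$ as $\sigma_{i_1,i_2}$, where $i_1 = h_1^{\SignalFunc}(\sigma)$ and $i_2 = h_2^{\SignalFunc}(\sigma)$, filling in zeros for pairs that are unused. Setting $R(\sigma_{i_1,i_2}) = \sum_j \SignalFunc(\sigma_{i_1,i_2}, j)\cdot \Psi(i_2,j)$ gives a feasible LP solution whose objective, by the reformulation in (\ref{equation:AlternativeRevenueDefinition}), equals $\Revenue(\SignalFunc)$.

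For the reverse direction, I would take any feasible LP solution $(\SignalFunc, R)$ and interpret it as a signaling scheme on the ground set of pairs $\{\sigma_{i_1,i_2}\}$; condition (\ref{equation:ValidSignalingScheme}) is the third LP constraint. The subtle point — which I expect to be the one place a reader could get stuck — is that the LP does not literally enforce that $i_1$ and $i_2$ are the top-two bidders at signal $\sigma_{i_1,i_2}$; there might be a third bidder $i_3$ whose expected contribution exceeds that of $i_2$. Fortunately the first inequality $R(\sigma_{i_1,i_2}) \le \sum_j \SignalFunc(\sigma_{i_1,i_2}, j)\cdot\Psi(i_1,j)$ combined with the equality constraint guarantees that $\Psi$-sum of $i_1$ dominates $\Psi$-sum of $i_2$, so even if another bidder sneaks in between them, the true second-max at signal $\sigma_{i_1,i_2}$ is still at least $R(\sigma_{i_1,i_2})$. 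Summing over signals yields $\Revenue(\SignalFunc) \ge \sum_{i_1\neq i_2} R(\sigma_{i_1,i_2})$, matching the LP's objective.

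Finally I would count variables and constraints: \MainLp{} has $O(n^2 m)$ variables ($\SignalFunc$ and $R$) and $O(n^2 + m)$ nontrivial constraints, so it is solvable in polynomial time by any standard LP algorithm. An optimal signaling scheme is then read off directly from the optimal $\SignalFunc$, concluding the proof.
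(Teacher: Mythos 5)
Your proof follows the same route as the paper: reduce to at most $n(n-1)$ signals via Observation~\ref{observation:FewSignals}, encode the problem as \MainLp{}, and solve it in polynomial time. The one step you elaborate beyond the paper's terse presentation --- that a feasible LP solution in which some third bidder outbids the designated $i_2$ still has true second-max at least $R(\sigma_{i_1,i_2})$, since both $i_1$ and $i_2$ attain sums at least $\sum_j \SignalFunc(\sigma_{i_1,i_2},j)\Psi(i_2,j)$ --- is correct and is exactly the detail needed to certify that the LP value is achievable, so your argument is sound.
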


\subsection*{Signaling versus Clustering}
A clustering scheme is a special case of a signaling scheme, where the
auctioneer cannot use randomness (see Section~\ref{section:Model}).
This restricted case has been studied in \cite{ghosh}, and is equivalent to
imposing the requirement that $\SignalFunc(\sigma,j) \in \{0,1\}$ for every
$\sigma$ and $j$ in our framework.

\begin{theorem}
The optimal revenue that can be extracted by a signaling scheme is at most
twice the optimal revenue that can be extracted by a clustering scheme, and
this is tight.
\end{theorem}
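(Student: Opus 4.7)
The plan is to establish the inequality $\Revenue(\SignalFunc^{*}) \le 2\cdot \Revenue(\mathcal{C}^{*})$ for the optimal signaling scheme $\SignalFunc^{*}$ and the optimal clustering $\mathcal{C}^{*}$, together with a family of instances in which the factor~$2$ is asymptotically achieved.  By Observation~\ref{observation:FewSignals} I may assume $\SignalFunc^{*}$ uses at most $n(n-1)$ signals, indexed by ordered pairs $(i_1,i_2)$, with revenue $R_{i_1,i_2}=\sum_{j}\SignalFunc^{*}(\sigma_{i_1,i_2},j)\Psi(i_2,j)$ and the \MainLp{} inequality $R_{i_1,i_2}\le \sum_{j}\SignalFunc^{*}(\sigma_{i_1,i_2},j)\Psi(i_1,j)$.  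Hence in each signal \emph{both} of the top two bidders bid at least $R_{i_1,i_2}$, the rigid structural fact that I plan to exploit in the rounding.

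For the upper bound, I would merge the two signals $\sigma_{i_1,i_2}$ and $\sigma_{i_2,i_1}$ attached to each unordered pair $\{i_1,i_2\}$ into a single candidate cluster $C_{\{i_1,i_2\}}$, whose combined fractional weights $M_{\{i_1,i_2\}}(j)=\SignalFunc^{*}(\sigma_{i_1,i_2},j)+\SignalFunc^{*}(\sigma_{i_2,i_1},j)$ form a fractional partition of $[m]$.  Summing the \MainLp{} relations over the two orderings of the pair then gives that the fractional bid of $i_1$ in $C_{\{i_1,i_2\}}$ equals $B_{i_1}(\sigma_{i_1,i_2})+R_{i_2,i_1}\ge R_{i_1,i_2}+R_{i_2,i_1}$, and symmetrically for~$i_2$.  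The remaining step is to round this fractional partition into an integer one in such a way that, in each resulting cluster, the two bidders $i_1,i_2$ each retain at least half of their fractional bids; this will make the second-highest bid in every cluster at least $(R_{i_1,i_2}+R_{i_2,i_1})/2$ and, summing over unordered pairs, yield a clustering of revenue $\ge \tfrac12\Revenue(\SignalFunc^{*})$.  The rounding itself I would perform via the method of conditional expectations, processing goods one at a time and using $\sum_{\{i_1,i_2\}}\tfrac12\bigl(B_{i_1}^{\mathrm{frac}}(C_{\{i_1,i_2\}})+B_{i_2}^{\mathrm{frac}}(C_{\{i_1,i_2\}})\bigr)$ as a pessimistic estimator of twice the achievable clustering revenue.

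For tightness, I would construct a family of instances in which the optimal LP vertex has many of the inequalities $R_{i_1,i_2}\le \sum_{j}\SignalFunc^{*}\Psi(i_1,j)$ tight, so that the two top bids are \emph{exactly} equal in each fractional signal.  Such instances genuinely exploit fractional splitting:\ any integer clustering must break every tie and thereby concentrate nearly all of the bid mass on only one of the two tied top bidders per cluster, forfeiting half of the signal's revenue.  A natural place to look is small-$m$, $n\ge 3$ instances in which one or two ``specialist'' bidders and a ``neutral'' bidder achieve equal bids only when the goods are split between two signals, so that any deterministic placement of the goods collapses the tie and cuts the second-highest bid roughly in half.

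The main obstacle is that the cluster revenue is the second-order statistic of the bidders' total bids, a non-linear function, so a generic randomized rounding that only preserves first moments may lose arbitrarily more than a factor two.  The lever that makes the factor~$2$ attainable is precisely the pairwise structure enforced by \MainLp{}:\ because exactly two bidders bid above $R_{\sigma}$ in every fractional signal, it is enough for the rounding to preserve a $1/2$ fraction of \emph{either} of those two bids in the matched cluster, which reduces the analysis of the non-linear second-max objective to a pair of linear per-bidder guarantees that the pessimistic-estimator argument can maintain deterministically.
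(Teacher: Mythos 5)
Your upper-bound argument has a genuine gap at the rounding step. You need every cluster $C_{\{i_1,i_2\}}$ to simultaneously retain at least half of the fractional bid of \emph{both} $i_1$ and $i_2$ (the second-highest bid is a min-type quantity, so preserving ``a $1/2$ fraction of \emph{either} of those two bids,'' as you write in your last paragraph, is not enough --- if only one of the two designated bidders retains mass, the cluster's second-highest bid can collapse to zero). But a simultaneous per-bidder, per-cluster guarantee of this kind is not something a fractional partition supports in general: a good $j$ carrying weight $1/2$ in cluster $\{1,2\}$ and weight $1/2$ in the disjoint cluster $\{3,4\}$ must be assigned integrally to one of them, and if $j$ is the only good valued by bidders $3$ and $4$, their cluster loses everything, not half. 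Your proposed pessimistic estimator $\sum_{\{i_1,i_2\}}\tfrac12\bigl(B_{i_1}^{\mathrm{frac}}+B_{i_2}^{\mathrm{frac}}\bigr)$ is linear and does \emph{not} lower-bound (twice) the sum of second-maxima --- for a single cluster with bids $a\ge b$ one has $\tfrac12(a+b)>2b$ whenever $a>3b$ --- so the method of conditional expectations cannot be run against it. You would need an additional structural argument ruling out such configurations at an optimal \MainLp{} vertex, and you do not supply one. The tightness direction is likewise only a sketch: you describe the kind of instance to look for but never exhibit one and verify the factor $2$.

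The paper avoids the rounding problem entirely. It upper-bounds the revenue of \emph{any} signaling scheme by the quantity $\min_{i'}\sum_j\max_{i\neq i'}\Psi(i,j)$ via the elementary chain
$\sum_\sigma \SecondMax_i \sum_j \SignalFunc(\sigma,j)\Psi(i,j) \le \sum_\sigma \max_{i\neq i'}\sum_j \SignalFunc(\sigma,j)\Psi(i,j) \le \sum_j \max_{i\neq i'}\Psi(i,j)$
(using $\sum_\sigma\SignalFunc(\sigma,j)=1$), and then invokes the result of Ghosh et al.\ that a clustering scheme achieving at least half of this benchmark always exists. Tightness is witnessed by the concrete instance of Example~\ref{gap_example}, where the optimal signaling scheme earns $\frac{n}{n+1}$ and the optimal clustering earns $\frac{n}{2(n+1)}$. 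If you want to salvage your approach, the missing ingredient is precisely the deterministic half-benchmark clustering guarantee, which is what the citation to \cite{ghosh} provides.
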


\begin{proof}
The algorithm in \cite{ghosh} produces a clustering scheme that extracts
revenue that is greater or equal to half of
\begin{equation}
\label{eq:upper_bound_ghosh}
\min_{i'} \sum_j \max_{i \neq i'} \Psi(i,j).
\end{equation}
Therefore, in order to establish the upper bound, it is sufficient to show
that the revenue extracted by any signaling scheme is bounded by Equation
(\ref{eq:upper_bound_ghosh}).
For every $i' \in [n]$, one can express the revenue of a signaling scheme
$\SignalFunc$ as
\begin{align*}
\Revenue(\SignalFunc)
~ = ~ &
\sum_\sigma \SecondMax_i \sum_j \SignalFunc(\sigma,j)
\Psi(i,j) \\
\leq ~ & \sum_\sigma \max_{i \neq i'} \sum_j \SignalFunc(\sigma,j)
\Psi(i,j) \\
\leq ~ &
\sum_\sigma  \sum_j \SignalFunc(\sigma,j) \max_{i \neq i'}
\Psi(i,j) \\
= ~ & \sum_j\max_{i \neq i'} \Psi(i,j) \ ,
\end{align*}
where the last equality holds since $\sum_\sigma \SignalFunc(\sigma,j) = 1$.
The upper bound follows.

To establish the lower bound, consider Example~\ref{gap_example}.
While the optimal signaling scheme extracts revenue $\frac{n}{(n+1)}$, it is
not difficult to verify that the optimal clustering scheme partitions items
$1, \ldots, m$ into pairs and leaves item 0 as a singleton.
This clustering scheme extracts revenue $\frac{n}{2(n+1)}$, which is half of
the revenue extracted by the optimal signaling scheme.
\end{proof}

We remark that while we used Equation (\ref{eq:upper_bound_ghosh}) as our
benchmark, a better benchmark would be to compare the clustering revenue to
the solution of the LP in Theorem~\ref{theorem:OptimalLp}. However,
Example~\ref{gap_example} demonstrates that $\frac{1}{2}$ is tight with
respect to this benchmark as well.

\subsection*{Social Welfare versus Revenue}
Increasing the revenue by signaling usually comes at the expense of degrading
the social welfare.
We show, however, that it is easy to calculate the best revenue one can get
without degrading the social welfare by much.
For every $j \in [m]$, let $\mu(j)$ denote the bidder $i$ that maximizes the
normalized valuation $\Psi(i, j)$ (which means that $i$ also maximizes $V(i,
j)$).
Then, the optimal social welfare is given by $W^* = \sum_{j \in [m]}
\Psi(\mu(j), j)$.
By augmenting \MainLp{} with the constraint
$$
\sum_{i_1, i_2 \in [n], i_1 \neq i_2} \, \sum_{j \in [m]}
\SignalFunc(\sigma_{i_1, i_2}, j) \cdot \Psi(i_1, j)
~ \geq ~
\beta W^* \ ,
$$
we guarantee the highest possible revenue conditioned on preserving at least a
$\beta$-fraction of the social welfare.
Theorem~\ref{theorem:SocialWelfare} (whose proof is deferred to the appendix)
shows that taking $\beta \leq 1 / 2$ does not affect \MainLp{}.
Note that this theorem can be viewed as a signaling analogue of Theorem~2 in
\cite{ghosh} and its proof essentially follows similar arguments.

\begin{theorem} \label{theorem:SocialWelfare}
There exists a revenue-optimal signaling scheme $\SignalFunc$ with
$\SocialWelfare(\SignalFunc) \geq W^* / 2$.
\end{theorem}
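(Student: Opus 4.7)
The plan is to show that augmenting \MainLp{} with the additional constraint $\sum_{i_1 \neq i_2} \sum_{j \in [m]} \SignalFunc(\sigma_{i_1,i_2}, j) \cdot \Psi(i_1, j) \geq W^*/2$ leaves the LP optimum unchanged; any optimal solution of the augmented LP is then the desired scheme. I would first dispose of the easy regime $R^* \geq W^*/2$, where $R^*$ denotes the \MainLp{} optimum. Since $\SecondMax_{i}\{\cdot\} \leq \Max_{i}\{\cdot\}$ pointwise, every feasible scheme satisfies $\Revenue(\SignalFunc) \leq \SocialWelfare(\SignalFunc)$, so any revenue-optimal scheme already has welfare at least $R^* \geq W^*/2$.

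In the substantive case $R^* < W^*/2$, I would pick a revenue-optimal scheme $\SignalFunc^*$ that additionally maximizes welfare among revenue-optimal schemes (existence follows from LP-polytope compactness and linearity of the welfare objective), with signals indexed $\sigma_{i_1, i_2}$ via Observation~\ref{observation:FewSignals}. Suppose, for contradiction, that $\SocialWelfare(\SignalFunc^*) < W^*/2$. Following the clustering argument of Theorem~2 in \cite{ghosh}, I would exhibit a local two-item, two-signal swap that strictly raises welfare while holding revenue at $R^*$. Concretely, one finds an item $j_1$ and a signal $\sigma_{i_1, i_2}$ with $\SignalFunc^*(\sigma_{i_1, i_2}, j_1) > 0$ and $\Psi(\mu(j_1), j_1) - \Psi(i_1, j_1)$ sizeable, together with a partner signal $\sigma_{\mu(j_1), i_2'}$ carrying some item $j_2$. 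Transferring $\epsilon$ mass of $j_1$ from $\sigma_{i_1, i_2}$ into $\sigma_{\mu(j_1), i_2'}$ and $\delta \cdot \epsilon$ mass of $j_2$ in the reverse direction, one chooses $\delta$ so that both \MainLp{} equality constraints (namely $R(\sigma_{i_1, i_2}) = \sum_{j} \SignalFunc(\sigma_{i_1, i_2}, j) \cdot \Psi(i_2, j)$ and its counterpart for $\sigma_{\mu(j_1), i_2'}$) remain tight and all inequality constraints continue to hold for small enough $\epsilon > 0$. A direct first-order calculation then shows a strictly positive welfare gain of order $\epsilon \cdot [\Psi(\mu(j_1), j_1) - \Psi(i_1, j_1)]$ plus controllable $O(\delta \epsilon)$ terms from the reverse transfer, contradicting the welfare-maximality of $\SignalFunc^*$.

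The hard part is guaranteeing that a partner signal and compensating item $j_2$, together with an admissible $\delta$, can always be exhibited under the hypothesis $\SocialWelfare(\SignalFunc^*) < W^*/2$. This is where that hypothesis is essential: the resulting aggregate welfare deficit $W^* - \SocialWelfare(\SignalFunc^*)$ exceeds $W^*/2$, so by averaging over items there must be items $j_1$ with substantial mass $\SignalFunc^*(\cdot, j_1)$ placed on signals whose top bidder is not $\mu(j_1)$, and the \MainLp{} inequality $R(\sigma) \leq \sum_{j} \SignalFunc(\sigma, j) \cdot \Psi(i_1, j)$ provides the algebraic slack to tune $\delta$ without violating feasibility. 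The argument parallels that of Ghosh et al.~\cite{ghosh}, and the factor $1/2$ is tight by inheriting their lower-bound example.
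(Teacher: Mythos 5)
Your high-level strategy (perturb a revenue-optimal scheme to gain welfare, contradicting welfare-maximality among revenue-optimal schemes) is a legitimate template, but the step you yourself flag as ``the hard part'' is exactly where the proof is missing, and the specific mechanism you propose does not close it. First, the partner signal $\sigma_{\mu(j_1), i_2'}$ need not exist: nothing prevents an optimal scheme in which no signal with positive mass has $\mu(j_1)$ as its top bidder, and your averaging over the welfare deficit only produces the item $j_1$, not the receiving signal, the compensating item $j_2$, or an admissible $\delta$. Second, even when a partner exists, transferring mass changes the aggregate sums $\sum_j \SignalFunc(\sigma,j)\Psi(i,j)$ for \emph{all} bidders $i$ in both signals, so the identities of the bidders realizing $\Max$ and $\SecondMax$ can change; keeping the two labelled equality constraints of \MainLp{} tight does not by itself guarantee that the true revenue (the actual second-highest expected bid) is preserved, nor that the welfare gain is $\epsilon\left[\Psi(\mu(j_1),j_1)-\Psi(i_1,j_1)\right]$ at first order --- that expression presumes $i_1$ and $\mu(j_1)$ remain the respective top bidders after the move.

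The paper sidesteps both difficulties with a simpler surgery: whenever $\SignalFunc(\sigma,j)>0$ but $\mu(j)\notin\{h_{1}(\sigma),h_{2}(\sigma)\}$, it splits $\sigma$ into a fresh \emph{singleton} signal carrying only item $j$ and a residual signal carrying the rest. The singleton's revenue is $\SignalFunc(\sigma,j)\cdot\SecondMax_{i}\{\Psi(i,j)\}$, which dominates both $\SignalFunc(\sigma,j)\Psi(h_{1}(\sigma),j)$ and $\SignalFunc(\sigma,j)\Psi(h_{2}(\sigma),j)$ precisely because $\mu(j)$ is a third bidder with a value at least as large; combined with $\max\{a_1,a_2\}+\min\{b_1,b_2\}\ge\min\{a_1+b_1,a_2+b_2\}$ this shows revenue cannot drop, so there is a revenue-optimal scheme in which every item is carried only by signals whose top two bidders include $\mu(j)$. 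The bound then follows from $\SocialWelfare(\SignalFunc)=\sum_{\sigma}\sum_{j}\SignalFunc(\sigma,j)\Psi(h_{1}(\sigma),j)\ge\frac{1}{2}\sum_{\sigma}\sum_{j}\SignalFunc(\sigma,j)\left[\Psi(h_{1}(\sigma),j)+\Psi(h_{2}(\sigma),j)\right]\ge\frac{1}{2}\sum_{j}\Psi(\mu(j),j)$, with no case split on $R^*$ and no pairwise exchange needed. To repair your argument you would need either to adopt this singleton-splitting idea or to supply a complete existence-and-feasibility analysis for your exchange; as written the proposal does not constitute a proof.
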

\newcommand{\ProofTheoremSocialWelfare}{
Consider an optimal signaling scheme and let $\sigma$ be a signal and $j$ an
item such that $ \SignalFunc(\sigma, j) > 0$.
If $\mu(j) \notin \{h_{1}^{\SignalFunc}(\sigma),
h_{2}^{\SignalFunc}(\sigma)\}$, then we construct the new signaling scheme
$\widehat{\SignalFunc}$ obtained from $\SignalFunc$ by replacing signal
$\sigma$ with the two new signals $\sigma', \sigma''$ such that
$\widehat{\SignalFunc}(\sigma', j) = \SignalFunc(\sigma,
j)$, $\widehat{\SignalFunc}(\sigma', j') = 0$ for $j' \neq j$; and
$\widehat{\SignalFunc}(\sigma'', j) = 0$, $\SignalFunc(\sigma'', j') =
\SignalFunc(\sigma, j')$ for $j' \neq j$.

We argue that $\Revenue(\widehat{\SignalFunc}) \geq \Revenue(\SignalFunc)$
which, by the optimality of $\SignalFunc$, implies that
$\Revenue(\widehat{\SignalFunc}) = \Revenue(\SignalFunc)$.
To that end, note that the contribution of $\sigma$ to $\Revenue(\SignalFunc)$
is
\begin{align*}
\Revenue(\sigma, \SignalFunc)
~ = ~ &
\SignalFunc(\sigma, j) \cdot \Psi(h_{2}^{\SignalFunc}(\sigma), j) +
\sum_{j' \neq j} \SignalFunc(\sigma, j') \cdot
\Psi(h_{2}^{\SignalFunc}(\sigma), j') \\
\leq ~ &
\SignalFunc(\sigma, j) \cdot \Psi(h_{1}^{\SignalFunc}(\sigma), j) +
\sum_{j' \neq j} \SignalFunc(\sigma, j') \cdot
\Psi(h_{1}^{\SignalFunc}(\sigma), j') \ ,
\end{align*}
whereas the contributions of $\sigma'$ and $\sigma''$ to
$\Revenue(\widehat{\SignalFunc})$ are
$$
\Revenue(\sigma', \widehat{\SignalFunc})
~ = ~
\SignalFunc(\sigma, j) \cdot \SecondMax_{i \in [n]} \left\{ \Psi(i, j) \right\}
$$
and
$$
\Revenue(\sigma'', \widehat{\SignalFunc})
~ = ~
\SecondMax_{i \in [n]} \left\{ \sum_{j' \neq j} \SignalFunc(\sigma, j') \cdot
\Psi(i, j') \right\} \ ,
$$
respectively.
The argument follows since
$$
\Revenue(\sigma', \widehat{\SignalFunc})
~ \geq ~
\Max \left\{ \SignalFunc(\sigma, j) \cdot \Psi(h_{2}^{\SignalFunc}(\sigma), j),
\SignalFunc(\sigma, j) \cdot \Psi(h_{1}^{\SignalFunc}(\sigma), j) \right\}
$$
and
$$
\Revenue(\sigma'', \widehat{\SignalFunc})
~ \geq ~
\Min \left\{ \sum_{j' \neq j} \SignalFunc(\sigma, j') \cdot
\Psi(h_{2}^{\SignalFunc}(\sigma), j'), \sum_{j' \neq j} \SignalFunc(\sigma, j') \cdot
\Psi(h_{1}^{\SignalFunc}(\sigma), j') \right\} \ .
$$

It follows that there exists a revenue-optimal signaling scheme $\SignalFunc$
such that $\SignalFunc(\sigma, j) > 0$ only if $\mu(j) \in \{ h_{1}(\sigma),
h_{2}(\sigma) \}$.
Therefore, the social welfare of $\SignalFunc$ satisfies
\begin{align*}
\SocialWelfare(\SignalFunc)
~ = ~ &
\sum_{\sigma \in [s]} \sum_{j \in [m]} \SignalFunc(\sigma, j) \cdot
\Psi(h_{1}(\sigma), j) \\
\geq ~ &
\sum_{\sigma \in [s]} \sum_{j \in [m]} \SignalFunc(\sigma, j) \cdot
\frac{\Psi(h_{1}(\sigma), j) + \Psi(h_{2}(\sigma), j)}{2} \\
\geq ~ &
\frac{1}{2} \sum_{\sigma \in [s]} \sum_{j \in [m]} \SignalFunc(\sigma, j)
\cdot \Psi(\mu(j), j)
~ = ~
\frac{1}{2} W^* ~ .
\end{align*}
The assertion follows.
} 

\section{Optimal Signaling Schemes --- The Bayesian Case}
We now turn to discuss the more interesting Bayesian setting, considering a
PSA $\mathcal{A} = \left\langle n, m, k, p, q, \{ V_{\ell} \}_{\ell \in [k]}
\right\rangle$, where $q$ is a probability distribution over the valuation
matrices $V_1, \dots, V_k \in \Reals_{\geq 0}^{n \times m}$.
Our goal in this section is twofold:
(1) proving that the RMS problem under the Bayesian setting is NP-hard; and
(2) presenting poly-time algorithms when $k$ or $m$ are fixed.
Note that the RMS problem remains NP-hard if $n$ is fixed as long as both $k$
and $m$ are free parameters.

\subsection*{Tractable Special Cases}
Let us start with developing an efficient algorithm for the RMS problem
assuming that $k = O (1)$ (without any restriction on $n$ or $m$).
Consider some $s$-signal signaling scheme $\SignalFunc$ for $\mathcal{A}$.
Given a Bayesian outcome $\ell \in [k]$ and a signal $\sigma \in [s]$, let
$h_{1}^{\SignalFunc}(\ell, \sigma)$ and $h_{2}^{\SignalFunc}(\ell, \sigma)$
denote the bidders $i$ that realize $\Max_{i \in [n]} \left\{ \sum_{j \in [m]}
\SignalFunc(\sigma, j) \cdot \Psi_{\ell}(i, j) \right\}$ and $\SecondMax_{i
\in [n]} \left\{ \sum_{j \in [m]} \SignalFunc(\sigma, j) \cdot \Psi_{\ell}(i,
j) \right\}$, respectively.
(When the signaling scheme $\SignalFunc$ is clear from the context, we may
omit the superscripts.)
Using this notation, we can now state the following observation which is
established by repeating the line of arguments that led to
Observation~\ref{observation:FewSignals}.

\begin{observation} \label{observation:FewSignalsBayesian}
There exists an optimal $s$-signal signaling scheme $\SignalFunc$ for
$\mathcal{A}$ such that given $\sigma, \sigma' \in [s]$, if
$h_{1}^{\SignalFunc}(\ell, \sigma) = h_{1}^{\SignalFunc}(\ell, \sigma')$ and
$h_{2}^{\SignalFunc}(\ell, \sigma) = h_{2}^{\SignalFunc}(\ell, \sigma')$ for
every $\ell \in [k]$, then $\sigma = \sigma'$.
\end{observation}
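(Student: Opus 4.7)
The plan is to mirror the proof of Observation~\ref{observation:FewSignals}, exploiting the fact that the Bayesian revenue is a $q$-convex combination of per-$\ell$ revenues, each of which is linear in $\SignalFunc$ in the same way as the known-valuations revenue (cf.\ equation~(\ref{equation:AlternativeRevenueDefinition})). Accordingly, I take an optimal $s$-signal scheme $\SignalFunc$ with $s$ minimum, and assume for contradiction that there are distinct signals $\sigma,\sigma'\in [s]$ for which $h_1^{\SignalFunc}(\ell,\sigma)=h_1^{\SignalFunc}(\ell,\sigma')=i_1(\ell)$ and $h_2^{\SignalFunc}(\ell,\sigma)=h_2^{\SignalFunc}(\ell,\sigma')=i_2(\ell)$ simultaneously for every $\ell\in[k]$.

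Next, I merge $\sigma$ and $\sigma'$ into a single new signal $\sigma^*$, defining $\SignalFunc^*(\sigma^*,j)=\SignalFunc(\sigma,j)+\SignalFunc(\sigma',j)$ for every $j\in[m]$ and keeping all other signals of $\SignalFunc$ unchanged. Validity in the sense of~(\ref{equation:ValidSignalingScheme}) is immediate. The central step is to verify that for each $\ell\in[k]$, the pair of top bidders under $\sigma^*$ in scenario $\ell$ is still $(i_1(\ell),i_2(\ell))$: this follows because for any bidder $i$,
\[
\sum_{j\in[m]}\SignalFunc^*(\sigma^*,j)\Psi_\ell(i,j)=\sum_{j\in[m]}\SignalFunc(\sigma,j)\Psi_\ell(i,j)+\sum_{j\in[m]}\SignalFunc(\sigma',j)\Psi_\ell(i,j),
\]
so the common maximizer $i_1(\ell)$ of each summand remains the overall maximizer, and for every $i\neq i_1(\ell)$ the same additive inequality with $i_2(\ell)$ as the upper bound gives the second-maximum. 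Consequently, the contribution of $\sigma^*$ to the per-$\ell$ revenue is $\sum_j\SignalFunc^*(\sigma^*,j)\Psi_\ell(i_2(\ell),j)$, which equals the sum of the contributions of $\sigma$ and $\sigma'$ in scenario $\ell$.

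Summing over $\ell$ weighted by $q(\ell)$ yields $\Revenue_{\mathcal{A}}(\SignalFunc^*)=\Revenue_{\mathcal{A}}(\SignalFunc)$, so $\SignalFunc^*$ is also an optimal scheme but uses $s-1$ signals, contradicting the minimality of $s$. Hence such a collision between $\sigma$ and $\sigma'$ cannot occur, establishing the claim.

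The main obstacle I anticipate is the second-max step: the ``$\SecondMax$'' operator is not linear, so the merge argument is not entirely automatic. The resolution is exactly the additive bound above, which shows that the shared runner-up $i_2(\ell)$ remains a runner-up after the merge even though other bidders' bids also change. Everything else is a routine re-run of the known-valuations proof, now applied per scenario.
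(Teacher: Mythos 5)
Your proposal is correct and follows exactly the route the paper intends: the paper itself proves this observation by "repeating the line of arguments that led to Observation~\ref{observation:FewSignals}," i.e., merging two colliding signals into one and checking that the top pair $(i_1(\ell), i_2(\ell))$ and hence the per-$\ell$ contribution is preserved. Your explicit verification of the second-max step via the additive inequalities is a welcome bit of detail that the paper leaves implicit, but it is not a different argument.
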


Observation~\ref{observation:FewSignalsBayesian} implies that it is sufficient
to consider $O (n^{2 k})$ signals $\sigma$, each uniquely identified by
$h_{1}(1, \sigma), h_{2}(1, \sigma), \dots, h_{1}(k, \sigma), h_{2}(k,
\sigma)$.
In order to formulate it as a concise linear program, we fix
$$
\Lambda =
\left\{ \left\langle (i_{1}^{1}, i_{2}^{1}), \dots, (i_{1}^{k}, i_{2}^{k})
\right\rangle \mid
i_{h}^{\ell} \in [n] \, \forall h \in \{1, 2\}, \ell \in
[k] \, \land \, i_{1}^{\ell} \neq i_{2}^{\ell} \, \forall \ell \in [k]
\right\} ~ .
$$
An optimal signaling scheme $\SignalFunc$ can now be constructed by solving
the following linear program, denoted \MainLpBayesian{}:
\begin{align*}
\max ~ & \sum_{\lambda \in \Lambda} R(\sigma_{\lambda}) ~
\text{s.t.} \\
& R(\sigma_{\lambda})
~ \leq ~
\sum_{\ell \in [k]} q(\ell) \sum_{j \in [m]}
\SignalFunc(\sigma_{\lambda}, j) \cdot
\Psi_{\ell}(\lambda(\ell, 1), j) \quad \forall \lambda \in \Lambda \\
& R(\sigma_{\lambda})
~ = ~
\sum_{\ell \in [k]} q(\ell) \sum_{j \in [m]}
\SignalFunc(\sigma_{\lambda}, j) \cdot
\Psi_{\ell}(\lambda(\ell, 2), j) \quad \forall \lambda \in \Lambda \\
& \sum_{\lambda \in \Lambda} \SignalFunc(\sigma_{\lambda}, j)
~ = ~
1 \quad \forall j \in [m] \\
& \SignalFunc(\sigma_{\lambda}, j) ~ \geq ~ 0 \quad \forall \lambda \in
\Lambda, \forall j \in [m] \ .
\end{align*}

Since \MainLpBayesian{} consists of $O (n^{2 k} m)$ variables and $O (n^{2 k} +
m)$ constraints (excluding the non-negativity constraints), it can be solved
in polynomial time as long as $k$ is constant.

\begin{theorem} \label{theorem:OptimalLpBayesianFixedK}
If $k$ is fixed, then the RMS problem can be solved in polynomial time.
\end{theorem}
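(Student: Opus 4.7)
The plan is to mirror the argument used for the known-valuations case (Theorem~\ref{theorem:OptimalLp}), but with the tuple space $\Lambda$ replacing the set of ordered bidder pairs. Observation~\ref{observation:FewSignalsBayesian} tells me every optimal scheme may be assumed to have at most $|\Lambda| = O(n^{2k})$ signals, each uniquely labeled by a tuple $\lambda$ encoding the identities of the top-two bidders per scenario $\ell \in [k]$. This motivates the formulation \MainLpBayesian{} with one variable $\SignalFunc(\sigma_\lambda, j)$ per pair $(\lambda, j)$ together with auxiliary revenue variables $R(\sigma_\lambda)$, giving $O(n^{2k} m)$ variables and $O(n^{2k} + m)$ non-trivial constraints --- polynomial whenever $k$ is a fixed constant. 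Thus the theorem reduces to the correctness of \MainLpBayesian{}, namely that its optimum equals the optimal expected revenue.

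For the easy direction (LP optimum $\ge$ optimal revenue), I would take an optimal scheme and, via Observation~\ref{observation:FewSignalsBayesian}, assume its signals are indexed by distinct \emph{honest} tuples $\lambda$, where honest means that $(\lambda(\ell, 1), \lambda(\ell, 2))$ genuinely realize $(h_1^{\SignalFunc}(\ell, \sigma), h_2^{\SignalFunc}(\ell, \sigma))$ for every $\ell$. Embedding this scheme into \MainLpBayesian{} --- pour the mass onto $\SignalFunc(\sigma_\lambda, \cdot)$ for the honest tuples (zero elsewhere) and set $R(\sigma_\lambda)$ by the equality constraint --- produces a feasible LP solution: honesty of $\lambda$ implies $\lambda(\ell,1)$'s aggregate bid weakly dominates $\lambda(\ell,2)$'s, so the upper-bound constraint holds, and the objective reproduces the scheme's revenue by the alternative revenue formulation underlying \MainLpBayesian{}.

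The reverse direction (LP value $\le$ optimal revenue) is the technical core. Given a feasible LP solution, I would read off a scheme from the $\SignalFunc$ values (validity via the partition constraint) and then argue that its expected revenue is at least $\sum_\lambda R(\sigma_\lambda)$. My approach is a local exchange argument analogous to the proof of Observation~\ref{observation:FewSignalsBayesian}: as long as some $\sigma_\lambda$ carries positive mass but $\lambda$ is dishonest for some scenario $\ell$, I replace $(\lambda(\ell, 1), \lambda(\ell, 2))$ with the genuine top-two in scenario $\ell$ and move the offending mass to the new tuple $\lambda' \in \Lambda$, showing this step weakly increases the LP objective and respects feasibility. Once all labels are honest, per signal one has $R(\sigma_\lambda) = \sum_\ell q(\ell) \SecondMax_{i \in [n]} \{\sum_j \SignalFunc(\sigma_\lambda, j) \Psi_\ell(i, j)\}$, and summation finishes the proof. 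The main obstacle I anticipate is checking that the exchange step genuinely preserves the inequality in the upper-bound constraint when the mass is split across tuples, which I expect to handle by a case analysis on which scenarios' labels are being corrected, possibly splitting a single signal into multiple sub-signals before relabeling --- a move that is free thanks to Observation~\ref{observation:FewSignalsBayesian} licensing us to work in the full $|\Lambda|$-signal regime.
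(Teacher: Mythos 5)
Your outline is the paper's: Observation~\ref{observation:FewSignalsBayesian}, the tuple space $\Lambda$, the program \MainLpBayesian{}, and the $O(n^{2k}m)$ size count. The paper in fact stops there and never argues that the LP optimum equals the optimal revenue, so the correctness argument you sketch is additional content --- and it is exactly where your proposal breaks. The forward direction (embedding an honestly labelled optimal scheme) is fine. The reverse direction is not: the exchange step ``relabel a dishonest $\lambda$ by the genuine top two in scenario $\ell$ and the LP objective weakly increases'' is false for \MainLpBayesian{} as written, because its two revenue constraints aggregate over $\ell$ with weights $q(\ell)$ rather than holding per scenario, so a dishonest tuple can trade surplus between scenarios. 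Concretely, take $n=3$, $m=1$, $k=2$, $q=(\tfrac12,\tfrac12)$, $\Psi_1(\cdot,1)=(0,4,10)$ and $\Psi_2(\cdot,1)=(0,0,6)$. Every signaling scheme earns $\tfrac12\cdot 4+\tfrac12\cdot 0=2$, but putting all mass on $\lambda=\langle(2,3),(3,1)\rangle$ forces $R(\sigma_\lambda)=\tfrac12\cdot 10+\tfrac12\cdot 0=5$ while the upper-bound constraint reads $5\le\tfrac12\cdot4+\tfrac12\cdot6=5$; the LP is feasible with value $5>2$. Relabelling to the honest $\langle(3,2),(3,1)\rangle$ drops the objective from $5$ to $2$, so no exchange argument can succeed: the LP value genuinely exceeds the optimal revenue, and the obstacle is not the feasibility bookkeeping you anticipated but the monotonicity claim itself.

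The repair is to impose the dominance constraints separately for each $\ell$: introduce $R_\ell(\sigma_\lambda)$ with $R_\ell(\sigma_\lambda)=\sum_j\SignalFunc(\sigma_\lambda,j)\,\Psi_\ell(\lambda(\ell,2),j)\le\sum_j\SignalFunc(\sigma_\lambda,j)\,\Psi_\ell(\lambda(\ell,1),j)$ and maximize $\sum_\lambda\sum_\ell q(\ell)R_\ell(\sigma_\lambda)$; this keeps $O(n^{2k}m)$ variables and $O(n^{2k}k+m)$ constraints, still polynomial for fixed $k$. With per-scenario constraints no exchange argument is needed: for every $\lambda$ with positive mass and every $\ell$, the bid of $\lambda(\ell,2)$ is dominated by the bid of the distinct bidder $\lambda(\ell,1)$ and hence is at most $\SecondMax_{i\in[n]}\{\sum_j\SignalFunc(\sigma_\lambda,j)\Psi_\ell(i,j)\}$, so the induced scheme's revenue is at least the LP objective, and the honest embedding gives the other inequality. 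In short, your plan is right in outline but only goes through after disaggregating the constraints; as stated, the step you flagged as the technical core is where the proof fails.
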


Next, we show how to compute an optimal $s$-signal signaling scheme
$\SignalFunc$ when $m = O (1)$ (without any restriction on $k$ and $n$).
The main ingredient for this will be the following lemma (refer to
\cite{Stanley} for a proof).

\begin{lemma} \label{lemma:whitney}
The number of distinct regions with non-empty interior\footnote{
A region $R \in \Reals^m$ is said to have a non-empty interior if it contains
an $m$-dimensional open set.
} defined by $t \geq m$ hyperplanes in $\Reals^m$ is bounded from above by the
Whitney number $W(m, t) = \sum_{i=0}^m {t \choose i} = O(t^m)$.
\end{lemma}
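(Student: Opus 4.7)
The plan is to prove the bound by induction on the number of hyperplanes $t$, using the classical ``one-dimension-down'' recurrence: when we insert a new hyperplane into an arrangement, the number of regions it slices into two equals the number of full-dimensional regions in the arrangement that the existing hyperplanes cut out on it, which is a problem in $\Reals^{m-1}$. Let $r(t,m)$ denote the maximum, over all arrangements of $t$ hyperplanes in $\Reals^m$, of the number of regions with non-empty interior; the goal is to show $r(t,m) \leq W(m,t)$.

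The base cases $r(0,m)=1$ (the whole space) and $r(t,0)=1$ match $W(m,0)=W(0,t)=1$, so the induction can start. For the inductive step I would fix an arrangement $H_1,\dots,H_t$ in $\Reals^m$, remove $H_t$, and compare. A region $R$ of the smaller arrangement $\{H_1,\dots,H_{t-1}\}$ contributes two regions to the full arrangement precisely when $H_t$ meets the interior of $R$, and one region otherwise. The set of regions of $\{H_1,\dots,H_{t-1}\}$ whose interiors meet $H_t$ is in bijection with the full-dimensional regions of the induced arrangement $\{H_i \cap H_t : i < t\}$ inside $H_t \cong \Reals^{m-1}$. This yields the recurrence
$$r(t,m) \;\leq\; r(t-1,m) \;+\; r(t-1,m-1),$$
and Pascal's identity $\binom{t}{i} = \binom{t-1}{i} + \binom{t-1}{i-1}$ immediately gives $W(m,t) = W(m,t-1) + W(m-1,t-1)$, closing the induction.

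The main obstacle is handling non-general position. If some $H_i$ coincides with or is parallel to $H_t$, then $H_i \cap H_t$ is either all of $H_t$ or empty, and the ``$t-1$ hyperplanes in $\Reals^{m-1}$'' count has to be interpreted loosely. The cleanest resolution is to argue that the maximum $r(t,m)$ is achieved in general position: a small generic perturbation of the normals and offsets cannot decrease the number of full-dimensional regions (any full-dimensional region survives a sufficiently small perturbation), so it suffices to prove the recurrence when the $H_i \cap H_t$ are genuinely $t-1$ distinct hyperplanes inside $H_t$. With that reduction, the inductive step and Pascal's identity are routine, and the desired big-$O$ statement $W(m,t) = O(t^m)$ follows since each binomial coefficient $\binom{t}{i}$ with $i \leq m$ is $O(t^m)$ and there are $m+1$ of them.
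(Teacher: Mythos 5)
Your proof is correct. Note that the paper does not actually prove this lemma itself --- it simply cites Stanley's notes on hyperplane arrangements --- and your deletion--restriction induction ($r(t,m)\leq r(t-1,m)+r(t-1,m-1)$ matched against Pascal's identity for $W(m,t)$) is precisely the standard argument from that source, including the correct observation that each region of the smaller arrangement meeting $H_t$ intersects it in exactly one full-dimensional region of the induced arrangement in $H_t\cong\Reals^{m-1}$. Your perturbation step for non-general position is fine (fix one interior witness point per region and perturb so little that no witness changes its sign vector), though it is not strictly needed: the recurrence already holds as an inequality in degenerate cases, since coincident or parallel hyperplanes only make the induced arrangement smaller and $r$ is monotone in the number of hyperplanes.
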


Given some $\lambda \in \Lambda$, we define $X_{\lambda}$ to be the region
that contains every vector $x \in \Reals_{\geq 0}^{m} - \{0\}$ such that
\begin{equation} \label{equation:LinearConstraints}
\sum_{j \in [m]} x(j) \cdot \Psi_{\ell} \left( \lambda(\ell, 1), j \right)
~ \geq ~
\sum_{j \in [m]} x(j) \cdot \Psi_{\ell} \left( \lambda(\ell, 2), j \right)
~ \geq ~
\sum_{j \in [m]} x(j) \cdot \Psi_{\ell} \left( i, j \right)
\end{equation}
for every $\ell \in [k]$ and $i \notin \{ \lambda(\ell, 1), \lambda(\ell, 2)
\}$.
The key observation here is that if two signals $\sigma, \sigma' \in [s]$ are
such that their corresponding vectors $\SignalFunc(\sigma, \cdot),
\SignalFunc(\sigma, \cdot) \in \Reals_{\geq 0}^{m}$ fall into the same region
$X_{\lambda}$, $\lambda \in \Lambda$, then we can merge them without
decreasing the revenue.
Therefore, if we can come up with a poly-size subset $\Lambda'
\subseteq \Lambda$ so that the regions in $\{ X_{\lambda} \mid \lambda \in
\Lambda' \}$ cover the entire $\Reals_{\geq 0}^{m} - \{0\}$, then we can
construct an optimal signaling scheme by picking one signal (the right one)
for each region $X_{\lambda}$ such that $\lambda \in \Lambda'$.

So, how can we come up with such a subset $\Lambda' \subseteq \Lambda$?
It turns out that although there are many regions $X_{\lambda}$, only a
polynomially small subset of them have a non-empty interior.
Indeed, the total number of linear constraints
(\ref{equation:LinearConstraints}) involved in the definition of the regions
$X_{\lambda}$, $\lambda \in \Lambda$, is $n^2 k$ (each linear constraint is of
the form $\sum_{j \in [m]} x(j) \cdot \Psi_{\ell}(i, j) \geq \sum_{j \in [m]}
x(j) \cdot \Psi_{\ell}(i', j)$ for some $i, i' \in [n]$ and $\ell \in [k]$).
Since those linear constraints correspond to hyperplanes in $\Reals^m$,
Lemma~\ref{lemma:whitney} guarantees that there are $O ((n^2 k)^m)$ regions
$X_{\lambda}$ with a non-empty interior.

Once the subset $\Lambda' \subseteq \Lambda$ of regions with non-empty
interior has been identified, providing the linear constraints of
(\ref{equation:LinearConstraints}) for each such region, we can rewrite
\MainLpBayesian{}, dedicating a single signal $\sigma$ to each region
$X_{\lambda}$ such that $\lambda \in \Lambda'$ (the vector
$\SignalFunc(\sigma, \cdot)$ takes the role of the vector $x$ in
(\ref{equation:LinearConstraints})).
The resulting linear program consists of $O (|\Lambda'| \cdot m)$ variables and
$O (|\Lambda'| \cdot k n)$ constraints, thus it can be solved in polynomial
time.

It remains to show that we can efficiently enumerate the collection of regions
with non-empty interiors.
This is carried out by recursion on $k$:
When $k = 1$, we can simply iterate through all the regions and check if their
interior is non-empty.
For the recursive step, observe that if the region $X_{\lambda}$ has an empty
interior, then clearly, so does the region $X_{\lambda \circ (i_{1}^{k + 1},
i_{2}^{k + 1})}$ for every $i_{1}^{k + 1}, i_{2}^{k + 1} \in [n]$.
Therefore, we can keep iterating only through those regions that had a
non-empty interior in the previous recursive level.
By Lemma~\ref{lemma:whitney}, the whole process requires checking $n (n - 1)
\cdot O ((n^2 (k - 1))^m)$ regions.

\begin{theorem} \label{theorem:OptimalLpBayesianFixedM}
If $m$ is fixed, then the RMS problem can be solved in polynomial time.
\end{theorem}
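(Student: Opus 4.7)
The plan is to rewrite \MainLpBayesian{} so that it ranges over only a polynomial number of ``signal types,'' exploiting the observation that the pair $(h_1(\ell,\sigma), h_2(\ell,\sigma))$ depends only on which side of a small set of hyperplanes the signal vector $\SignalFunc(\sigma,\cdot) \in \Reals^m_{\geq 0}$ lies on. Concretely, I would first fix $\lambda \in \Lambda$ and identify the region $X_\lambda$ defined in (\ref{equation:LinearConstraints}) as the locus of vectors inducing the desired top-two pattern; Observation~\ref{observation:FewSignalsBayesian} then implies that for any two signals $\sigma,\sigma'$ whose columns both fall into the same $X_\lambda$, merging $\sigma$ and $\sigma'$ into one signal does not decrease the revenue. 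Hence it suffices to keep one signal per non-empty region.

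Next, I would bound the number of full-dimensional regions. The defining inequalities of all the $X_\lambda$ together amount to at most $n^2 k$ hyperplanes in $\Reals^m$ (one per unordered pair of bidders, per valuation matrix), so Lemma~\ref{lemma:whitney} gives at most $O((n^2 k)^m)$ regions of non-empty interior; call their index set $\Lambda' \subseteq \Lambda$. For fixed $m$ this is polynomial. Enumerating $\Lambda'$ would be done by recursion on $k$: at $k = 1$, iterate over all $n(n-1)$ candidate top-two pairs and test each region's interior by an LP feasibility check; at the inductive step, extend each surviving length-$(k-1)$ tuple $\lambda$ by all $n(n-1)$ possible pairs for matrix $k$, relying on the fact that $X_{\lambda \circ (i_1^k, i_2^k)} \subseteq X_\lambda$ so that empty parents can be pruned. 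Whitney's bound, applied at each level of the recursion, keeps the total number of feasibility checks polynomial.

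Once $\Lambda'$ is in hand, I would set up a linear program that dedicates one variable vector $\SignalFunc(\sigma_\lambda, \cdot)$ to each $\lambda \in \Lambda'$, subject to the constraints (\ref{equation:LinearConstraints}) pinning it to $X_\lambda$ and the partition-of-unity condition $\sum_{\lambda \in \Lambda'} \SignalFunc(\sigma_\lambda, j) = 1$ for every $j \in [m]$; the objective is the Bayesian-weighted second-max expression exactly as in \MainLpBayesian{}, but restricted to $\Lambda'$. This LP has $O(|\Lambda'| \cdot m)$ variables and $O(|\Lambda'| \cdot k n + m)$ constraints, so is solvable in polynomial time. The main technical issue to watch out for is the boundary case: a signal whose column sits on a lower-dimensional face lies in the closures of several $X_\lambda$'s, but since the revenue contribution is continuous in $\SignalFunc(\sigma_\lambda,\cdot)$ through the second-max, one can safely assign such boundary signals to any adjacent full-dimensional region without altering the value, which is why it is enough to range over the non-empty-interior subset $\Lambda'$ rather than all of $\Lambda$.
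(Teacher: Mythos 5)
Your proposal is correct and follows essentially the same route as the paper's proof: the same regions $X_\lambda$ from (\ref{equation:LinearConstraints}), the same merging argument, the Whitney bound of Lemma~\ref{lemma:whitney} giving $O((n^2k)^m)$ full-dimensional regions, the recursive enumeration over $k$ with pruning of empty parents, and the restricted LP over $\Lambda'$. Your explicit treatment of the boundary case (signals landing on lower-dimensional faces) is a point the paper leaves implicit, and your resolution of it is sound.
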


\subsection*{Hardness of the General Case}
Finally, we establish the NP-completeness of (the decision version of) the RMS
problem in the Bayesian setting for the case $n = 3$.
The inclusion of this problem in NP follows from
Theorem~\ref{theorem:MSignals} that ensures that it suffices to
consider signaling schemes with at most $m$ signals (which also implies that
the number of bits required to represent the solutions of the LP is
polynomial).

The remainder of the section is dedicated to proving that the RMS problem in
the Bayesian setting is NP-hard.
This is done by a reduction from MAX-CUT (problem ND16 in
\cite{GareyJohnson79}):
Given a graph $G = (V, E)$ and two vertices $x, y \in V$, the MAX-CUT problem
asks for the maximum integer $k$ such that there exists a vertex subset $U
\subseteq V$, $|\{x, y\} \cap U| = 1$, with at least $k$ edges crossing
between $U$ and $V - U$.
Given such an instance of MAX-CUT, assuming that $|V| = n$ and $|E| = m$,
we construct a PSA $\mathcal{A}$ with $3$ bidders, $n$ items (associated with
 the vertices in $V$), and $2 n + m -
3$ Bayesian outcomes.
It will be convenient to associate the Bayesian outcomes as follows:
each $2 \leq \ell \leq n - 1$ is associated with some vertex $u \in V - \{x,
y\}$;
each $n \leq \ell \leq 2 n - 3$ is also associated with some vertex $u \in V -
\{x, y\}$;
each $2 n - 2 \leq \ell \leq 2 n + m - 3$ is associated with some edge $(u, v)
\in E$.
Figure~\ref{fig:reduction} depicts the values of $\Phi_{\ell}(i, j) = q(\ell)
\cdot \Psi_{\ell}(i, j)$ for every $i \in \{1, 2, 3\}$, $j \in [n]$, and $\ell
\in [2 n + m - 3]$, where $K_1 \gg K_2 \gg 1$ are integers that will be
determined in the course of the proof.
This specifies everything we need for the reduction.

\begin{figure}
\centering
\subfigure[$\ell = 1$]{
\begin{tabular}{ c || c | c }
    & $s$ & $t$ \\
  \hline
  $1$ & $K_1$ & \\
  $2$ & & $K_1$  \\
  $3$ & $K_1$ & $K_1$\\
\end{tabular}
} \quad
\subfigure[$2 \leq \ell \leq n - 1$]{
\begin{tabular}{ c || c | c }
    & $s$ & $u$ \\
  \hline
  $1$ & $K_2$ & \\
  $2$ & & $K_2$  \\
  $3$ &  & \\
\end{tabular}
} \quad
\subfigure[$n \leq \ell \leq 2 n - 3$]{
\begin{tabular}{ c || c | c }
    & $t$ & $u$ \\
  \hline
  $1$ & $K_2$ & \\
  $2$ & & $K_2$  \\
  $3$ &  & \\
\end{tabular}
} \quad
\subfigure[$2 n - 2 \leq \ell \leq 2 n + m - 3$]{
\begin{tabular}{ c || p{1cm} | p{1cm} }
    & $u$ & $v$ \\
  \hline
  $1$ &  $1$ & \\
  $2$ & &  $1$  \\
  $3$ & $1$ & $1$\\
\end{tabular}
}
\caption{ \label{fig:reduction}
Representation of the mapping from MAX-CUT to Bayesian signaling.
The tables represent $\Phi_{\ell}(i, v)$ for $v \in V$ and $i \in \{1, 2,
3\}$.
Values not specified in the tables are zero.}
\end{figure}

Suppose that the solution to the MAX-CUT instance is $C^*$ and that this
is realized by the vertex subset $X \subseteq V$, where $x \in X$ and $y \in Y
= V - X$.
We design a signaling scheme $\SignalFunc$ with two signals $\sigma_x,
\sigma_y$ such that
$
\SignalFunc(\sigma_x, u) =
\left\{
\begin{array}{ll}
1 & \text{if } u \in X \\
0 & \text{if } u \in Y
\end{array}
\right.$
and
$\SignalFunc(\sigma_y, u) =
\left\{
\begin{array}{ll}
0 & \text{if } u \in X \\
1 & \text{if } u \in Y
\end{array}
\right.$.
It can be checked that the revenue generated by this signaling scheme is
$$
\sum_{\ell \in [2 n + m - 3]} \sum_{\sigma \in \{ \sigma_x, \sigma_y \}}
\SecondMax_{i \in \{1, 2, 3\}} \left\{ \sum_{u \in V} \Phi_{\ell}(i, u) \cdot
\SignalFunc(\sigma, u) \right\}
~ = ~
2 K_1 + (n - 2) K_2 + m + C^* ~ .
$$
The reduction is completed by showing that this is an upper bound on the
revenue generated by any signaling scheme.

Consider some $s$-signal signaling scheme $\SignalFunc$.
The revenue of $\SignalFunc$ is
\begin{align}
\Revenue(\SignalFunc)
~ = ~ &
\sum_{\ell \in [2 n + m - 3]} \sum_{\sigma \in [s]}
\SecondMax_{i \in \{1, 2, 3\}} \left\{ \sum_{u \in V} \Phi_{\ell}(i, u) \cdot
\SignalFunc(\sigma, u) \right\} \nonumber \\
= ~ &
K_1 \sum_{\sigma \in [s]} \Max\{ \SignalFunc(\sigma, x), \SignalFunc(\sigma,
y) \} \nonumber \\
& + K_2 \sum_{u \in V - \{x, y\}} \sum_{\sigma \in [s]} \left[
\Min\{ \SignalFunc(\sigma, x), \SignalFunc(\sigma, u) \} +
\Min\{ \SignalFunc(\sigma, y), \SignalFunc(\sigma, u) \}
\right] \nonumber \\
& + \sum_{(u, v) \in E} \sum_{\sigma \in [s]} \Max\{ \SignalFunc(\sigma, u),
\SignalFunc(\sigma, v) \} \ . \label{equation:RevenueReduction}
\end{align}
We argue that if $\SignalFunc$ is optimal, then for each signal $\sigma$,
either $\SignalFunc(\sigma, x) = 0$ or $\SignalFunc(\sigma, y) = 0$.
Indeed, if there is a signal $\sigma$ with $\SignalFunc(\sigma, x)$ and
$\SignalFunc(\sigma, y)$ simultaneously positive, then we can split this
signal into two signals $\sigma', \sigma''$ such that $\SignalFunc(\sigma', x)
= \SignalFunc(\sigma, x)$ and $\SignalFunc(\sigma', u) = 0$ for $u \neq x$;
$\SignalFunc(\sigma'', x) = 0$ and $\SignalFunc(\sigma'', u) =
\SignalFunc(\sigma, u)$ for $u \neq x$.
Taking $K_1$ to be sufficiently large ensures that the revenue increases
following this transformation.

Now, consider a signal $\sigma \in [s]$ with $\SignalFunc(\sigma, x) > 0$.
If there exists some $u \neq x$ such that $\SignalFunc(\sigma, u) >
\SignalFunc(\sigma, x)$, then there must exists a signal $\sigma' \in [s]$
such that $\SignalFunc(\sigma', u) < \SignalFunc(\sigma', x)$ since
$\sum_{\sigma \in [s]} \SignalFunc(\sigma, x) = \sum_{\sigma \in [s]}
\SignalFunc(\sigma, u) = 1$.
Therefore, we can increase the value of $\SignalFunc(\sigma', u)$ and decrease
the value of $\SignalFunc(\sigma, u)$ by a small value $\delta$, obtaining a
valid signaling scheme with larger revenue (this is due to the fact that $K_2$
is large compared to $1$).
Similarly, we can claim that in an optimal signaling scheme, a signal $\sigma
\in [s]$ with $\SignalFunc(\sigma, y) > 0$ has $\SignalFunc(\sigma, u) \leq
\SignalFunc(\sigma, y)$ for all $u \in V$.\
The same argument also implies that for each signal $\sigma$ that has positive
probability, either $\SignalFunc(\sigma, x) > 0$ or $\SignalFunc(\sigma, y) >
0$.

Consider a signal $\sigma$ and let $x = u_0, u_1, \dots, u_k \in V$ be the
items with positive $\SignalFunc(\sigma, u_i)$.
Assume without loss of generality that
$\SignalFunc(\sigma, u_0) \geq \SignalFunc(\sigma, u_1) \geq \cdots \geq
\SignalFunc(\sigma, u_k) > \SignalFunc(\sigma, u_{k + 1}) = 0$.
Now, split signal $\sigma$ in $k + 1$ signals $\sigma^0, \sigma^1, \dots,
\sigma^k$ such that $\SignalFunc(\sigma^i, s) = \SignalFunc(\sigma^i, u_1) =
\cdots = \SignalFunc(\sigma^i, u_i) = \SignalFunc(\sigma, u_i) -
\SignalFunc(\sigma, u_{i+1})$.
By substituting the split signals into (\ref{equation:RevenueReduction}), we
conclude that the revenue is kept unchanged.

Therefore, PSA instances produced by our reduction always admit an
optimal $s$-signal signaling scheme $\SignalFunc$ such that for every signal
$\sigma \in [s]$, there exist a vertex subset $U_{\sigma} \subseteq V$ and a
real $0 < p_{\sigma} \leq 1$ satisfying: \\
(1) $|U_{\sigma} \cap \{x, y\}| = 1$; \\
(2) $\SignalFunc(\sigma, u) = p_{\sigma}$ for every $u \in U_{\sigma}$; \\
(3) $\SignalFunc(\sigma, u) = 0$ for every $u \notin U_{\sigma}$; \\
(4) $\sum_{\sigma : v \in U_{\sigma}} p_{\sigma} = 1$ for every $v \in V$; and \\
(5) $\sum_{\sigma \in [s]} p_{\sigma} = \sum_{\sigma : x \in U_{\sigma}}
p_{\sigma} + \sum_{\sigma : y \in U_{\sigma}} p_{\sigma} = 2$.
This is employed in order to prove that $\Revenue(\SignalFunc) \leq 2 K_1 + (n
- 2) K_2 + m + C^*$.
From (\ref{equation:RevenueReduction}), we see that
$\Revenue(\SignalFunc) =
2 K_1 + (n - 2) K_2 +
\sum_{(u, v) \in E} \sum_{\sigma \in [s]}
\Max\{\SignalFunc(\sigma, u), \SignalFunc(\sigma, v) \}$,
so it remains to show that
\begin{equation} \label{equation:RemainingTerm}
\sum_{(u, v) \in E} \sum_{\sigma \in [s]} \Max\{\SignalFunc(\sigma, u),
\SignalFunc(\sigma, v) \} \leq m + C^* \ .
\end{equation}

To that end, note that every edge $(u, v) \in E$ and signal $\sigma \in [s]$
contribute $p_{\sigma}$ to the the left-hand side in
(\ref{equation:RemainingTerm}) if $|\{u, v\} \cap U_{\sigma}| \geq 1$;
and $0$ otherwise.
Therefore, the left-hand side in (\ref{equation:RemainingTerm}) is equal to
$$
\sum_{(u, v) \in E} \left[ 1 + (1 / 2) \sum_{\sigma : | \{u, v\} \cap
U_{\sigma} | = 1} p_{\sigma} \right]
~ = ~
m + \sum_{\sigma \in [s]} \frac{p_{\sigma}}{2} |\partial(U_{\sigma})|
~ \leq ~
m + C^* \ ,
$$
where $\partial(U)$ is the set of edges with exactly one endpoint in $U$, and
the last inequality follows since
$\sum_{\sigma \in [s]} \frac{p_{\sigma}}{2} = 1$, hence $\sum_{\sigma \in [s]}
\frac{p_{\sigma}}{2} |\partial(U_{\sigma})|$ can be viewed as the average size
of cuts corresponding to the vertex subsets $U_{\sigma}$, $\sigma \in [s]$.
This establishes the following theorem.

\begin{theorem}
The decision version of the RMS problem in the Bayesian setting is
NP-complete.
The problem is hard already for $n = 3$.
\end{theorem}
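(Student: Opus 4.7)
The plan is to establish both NP membership and NP-hardness, with three bidders sufficient for hardness. Membership follows from Theorem~\ref{theorem:MSignals}: an optimal scheme has at most $m$ signals, and its LP representation has polynomially bounded bit length, so the scheme itself is a succinct certificate. For hardness I would reduce from the variant of MAX-CUT (ND16) that asks for the maximum cut separating two prescribed vertices $x, y$. Given $G = (V, E)$ with $n = |V|$, $m = |E|$, the PSA from Figure~\ref{fig:reduction} uses three bidders, $n$ items indexed by $V$, and $2n + m - 3$ Bayesian outcomes: one of weight $K_1$ living on $\{x, y\}$, two families of $n - 2$ outcomes each of weight $K_2$ pairing $x$ or $y$ with each other vertex, and $m$ unit-weight outcomes, one per edge. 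Target revenue: $2K_1 + (n - 2)K_2 + m + C^*$, where $C^*$ is the MAX-CUT optimum and $K_1 \gg K_2 \gg 1$ are to be pinned down in the analysis.

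The easy direction is immediate: given a cut $X$ with $x \in X$, $y \in V \setminus X$ of size $C^*$, the two-signal scheme $\SignalFunc(\sigma_x, u) = \mathbf{1}[u \in X]$, $\SignalFunc(\sigma_y, u) = \mathbf{1}[u \in V \setminus X]$ yields exactly the target by summing the second-max contributions across the four outcome types. The reverse inequality is the bulk of the argument and proceeds by three structural reductions on an optimal scheme, each exploiting the scale separation. First, the $K_1$-outcome forces that no signal has $\SignalFunc(\sigma, x)$ and $\SignalFunc(\sigma, y)$ simultaneously positive: splitting such a signal into one supported on $x$ alone and one removing $x$ strictly increases the $K_1$-term, which dominates any loss in the edge outcomes once $K_1$ is large enough. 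Second, the $K_2$-outcomes force that within every signal $\sigma$ with $\SignalFunc(\sigma, x) > 0$, no other item $u$ satisfies $\SignalFunc(\sigma, u) > \SignalFunc(\sigma, x)$; otherwise marginal conservation yields a signal $\sigma'$ with the opposite inequality, and a small mass exchange in coordinate $u$ between $\sigma$ and $\sigma'$ strictly increases the $K_2$-contribution. Symmetrically for $y$. Third, a level-splitting transformation replaces each signal by a family of uniform signals on nested subsets containing exactly one of $\{x, y\}$, preserving the revenue.

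After these reductions each signal is characterized by a subset $U_\sigma$ with $|U_\sigma \cap \{x, y\}| = 1$ and a weight $p_\sigma$ with $\SignalFunc(\sigma, u) = p_\sigma$ on $U_\sigma$ and zero elsewhere; marginal constraints then give $\sum_\sigma p_\sigma = 2$. Plugging into~(\ref{equation:RevenueReduction}) leaves the edge contribution $\sum_\sigma \tfrac{p_\sigma}{2}\,|\partial(U_\sigma)|$, a convex combination of cut sizes across $\{x, y\}$-separating vertex sets, which is bounded by $C^*$. Combining the two directions yields the desired equivalence, and since the threshold is polynomially bounded the reduction is polynomial. The main obstacle I anticipate is pinning down precise polynomial values of $K_1, K_2$ that simultaneously dominate the worst-case loss in the $K_2$-outcomes and in the edge outcomes during each swap, so that the structural reductions remain (strictly) revenue-improving uniformly across all intermediate schemes; once this bookkeeping is settled, the rest is direct algebraic verification.
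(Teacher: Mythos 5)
Your proposal follows the paper's proof essentially step for step: the same NP-membership argument via Theorem~\ref{theorem:MSignals}, the same MAX-CUT reduction with the $K_1$/$K_2$/unit-weight outcome families, the same three structural reductions on an optimal scheme, and the same final averaging bound $m + \sum_\sigma \frac{p_\sigma}{2}\lvert\partial(U_\sigma)\rvert \leq m + C^*$. The one issue you flag --- pinning down explicit polynomial values of $K_1$ and $K_2$ so that each swap is uniformly revenue-improving --- is also left implicit in the paper, so it is a fair point of care but not a divergence in approach.
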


\section{Bounding the Number of Signals}
\label{section:BoundNumberSignals}
In a market with $n$ bidders, the algorithm described in
\Sec{}~\ref{section:OptimalSchemesKnownValuations} generates a signaling
scheme with $O(n^2)$ signals.
In fact, some instances might require that many signals in order to produce
the optimal signaling.

\begin{example} \label{example:ManySignals}
Consider a KPSA with $n^2 - n$ items appearing with uniform
probability.
Each item $I_{i, i'}$ is labeled with an ordered pair $(i, i')$ of bidders,
where $V(i, I_{i, i'}) = 1$, $V(i', I_{i, i'}) = \frac{1}{2}$, and $V(i'',
I_{i, i'}) = 0$ for every $i'' \in [n] - \{i, i'\}$.
The optimal signaling generates revenue of $\frac{3}{4}$, by emitting a signal
$\sigma_{i, i'}$ when either $I_{i, i'}$ or $I_{i', i}$ is chosen.
Notice that $s = {n \choose 2}$ signals are required to implement this
signaling scheme.
It is not hard to see that with fewer signals, it is impossible to achieve
this revenue.
\end{example}

The KPSA described in Example~\ref{example:ManySignals} has $m = \Omega
(n^2)$ items.
Can we construct a similar bad example with much fewer items?
More generally, can we bound the minimum number of signals required to
implement an optimal signaling scheme as a function depending only on $m$?

\begin{theorem}\label{theorem:MSignals}
Every PSA has an optimal signaling scheme with $m$ signals.
\end{theorem}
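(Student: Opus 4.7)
The plan is to represent each signal $\sigma$ by the vector $x^\sigma \in \Reals_{\geq 0}^m$ with $x^\sigma(j) = \SignalFunc(\sigma, j)$, rewrite the revenue objective in terms of these vectors, and then use a linear-algebraic perturbation argument to sparsify any optimal scheme down to at most $m$ signals.

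First, I would observe that the Bayesian revenue formula from Section~\ref{section:Model} takes the form $\Revenue(\SignalFunc) = \sum_{\sigma} g(x^\sigma)$, where
\[
g(x) \; := \; \sum_{\ell \in [k]} q(\ell) \cdot \SecondMax_{i \in [n]} \left\{ \sum_{j \in [m]} x(j) \cdot \Psi_\ell(i,j) \right\}.
\]
The function $g$ is continuous and positively homogeneous of degree one: $g(\alpha x) = \alpha g(x)$ for every $\alpha \geq 0$, since multiplying $x$ by $\alpha$ scales each of the $n$ quantities inside the $\SecondMax$ by the same nonnegative factor and in particular preserves the identities of the top-two maximizers. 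The validity condition (\ref{equation:ValidSignalingScheme}) reads $\sum_\sigma x^\sigma = \one$.

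Next, consider an optimal signaling scheme $\SignalFunc$ that minimizes the number of signals $s$, and, without loss of generality, assume $x^\sigma \neq 0$ for every $\sigma$. Suppose toward contradiction that $s > m$. Then $x^1, \ldots, x^s$ are linearly dependent in $\Reals^m$, so there exist coefficients $\alpha_1, \ldots, \alpha_s \in \Reals$, not all zero, with $\sum_\sigma \alpha_\sigma x^\sigma = 0$. Since each $x^\sigma$ is non-negative and nonzero, both sign classes $\{\sigma : \alpha_\sigma > 0\}$ and $\{\sigma : \alpha_\sigma < 0\}$ are nonempty; otherwise a vanishing combination of non-negative nonzero vectors with coefficients of a single sign would force the coefficients to be zero on the support.

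Finally, for a real parameter $\epsilon$ I would define the perturbed scheme $\SignalFunc_\epsilon$ by $\SignalFunc_\epsilon(\sigma, j) := (1 + \epsilon \alpha_\sigma) \, x^\sigma(j)$. Since $\sum_\sigma \alpha_\sigma x^\sigma = 0$, the marginal constraint $\sum_\sigma \SignalFunc_\epsilon(\sigma, j) = 1$ still holds, and $\SignalFunc_\epsilon$ is feasible as long as $1 + \epsilon \alpha_\sigma \geq 0$ for every $\sigma$. By positive homogeneity of $g$ at each signal,
\[
\Revenue(\SignalFunc_\epsilon) \; = \; \sum_\sigma (1 + \epsilon \alpha_\sigma) \cdot g(x^\sigma) \; = \; \Revenue(\SignalFunc) + \epsilon \sum_\sigma \alpha_\sigma g(x^\sigma),
\]
which is linear in $\epsilon$ over the feasible interval around $0$. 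Since $\epsilon = 0$ is optimal, the slope $\sum_\sigma \alpha_\sigma g(x^\sigma)$ must vanish, and revenue is therefore constant throughout that interval. Sliding $\epsilon$ up to $\epsilon^* := \min_{\sigma : \alpha_\sigma < 0}(-1/\alpha_\sigma) > 0$ zeroes out at least one of the perturbed vectors, producing an optimal signaling scheme with strictly fewer than $s$ signals and contradicting the minimality of $s$.

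The main delicate point is the combination of positive homogeneity with the boundary behavior: one needs that zeroing a single $(1+\epsilon^*\alpha_\sigma)$ genuinely drops a signal (using $g(0) = 0$ and continuity of $g$) rather than merely shrinking its contribution, and one should also verify that the positive-scaling argument works uniformly across all Bayesian outcomes $\ell \in [k]$, which follows because positive scaling preserves every inequality $\sum_j x^\sigma(j)\Psi_\ell(i,j) \geq \sum_j x^\sigma(j)\Psi_\ell(i',j)$ simultaneously.
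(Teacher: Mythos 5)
Your proposal is correct and follows essentially the same route as the paper's proof: represent signals as non-negative vectors in $\Reals^m$, invoke linear dependence when $s > m$, exploit the positive homogeneity of the per-signal $\SecondMax$ contribution to show the revenue is linear in the perturbation parameter, conclude the slope vanishes by optimality, and push to the boundary to eliminate a signal. Your version is slightly more explicit than the paper's on two points the paper leaves implicit --- why both sign classes of the dependence coefficients are nonempty, and how the argument extends verbatim to the Bayesian case --- but the underlying argument is identical.
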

\begin{proof}
For brevity, we establish the assertion assuming the known-valuations
setting;
the proof of the Bayesian setting follows from the same line of arguments.
Consider some KPSA $\mathcal{A} = \langle n, m, p, V \rangle$ and let
$\SignalFunc$ be an $s$-signal signaling scheme for $\mathcal{A}$ that
minimizes $s$.
Assume by contradiction that $s > m$.
Associate with each signal $\sigma \in [s]$ a vector $\vec{q}_{\sigma}
\in \Reals_{\geq 0}^{m}$, where $q_{\sigma}(j) = \SignalFunc(\sigma, j)$ for
every $j \in [m]$.
We can rewrite the contribution of $\sigma$ to the revenue of $\SignalFunc$ as
$\Revenue(\sigma, \SignalFunc) = \SecondMax_{i \in [n]} \{ \sum_{j \in [m]}
q_{\sigma}(j) \cdot \Psi(i, j) \}$, so $\Revenue(\cdot, \SignalFunc)$ is a
homogeneous, but not necessarily linear, operator.

Since $s > m$, the vector collection $\{ \vec{q}_{\sigma} \mid \sigma \in [s]
\}$ must exhibit linear dependencies.
Therefore, there must exist some reals $x_1, \dots, x_r, x_{r + 1}, \dots,
x_{t} > 0$ and signals $\sigma_1, \dots \sigma_r, \sigma_{r + 1}, \dots,
\sigma_{t}$
such that
$$
x_1 \vec{q}_{\sigma_1} + \cdots + x_r \vec{q}_{\sigma_r}
~ = ~
x_{r + 1} \vec{q}_{\sigma_{r + 1}} + \cdots + x_{t} \vec{q}_{\sigma_{t}} \ .
$$
Consider the $s$-signal signaling scheme $\SignalFunc'$ defined by
the modified signals $\sigma'$ obtained from $\SignalFunc$ by setting
$$
\vec{q}_{\sigma'_z}
~ = ~
\left\{
\begin{array}{ll}
(1 + \epsilon x_z) \vec{q}_{\sigma_z} & \text{if } 1 \leq z \leq r \\
(1 - \epsilon x_z) \vec{q}_{\sigma_z} & \text{if } r + 1 \leq z \leq t \\
\vec{q}_{\sigma_z} & \text{otherwise} \ .
\end{array}
\right.
$$
For any $\epsilon \in \left[ -\frac{1}{\Max_{1 \leq z \leq r} \{x_z\}},
\frac{1}{\Max_{r + 1 \leq z \leq t} \{x_z\}} \right]$, the resulting signaling
scheme $\SignalFunc'$ is valid as $\sum_{\sigma' \in [s]} \vec{q}_{\sigma'} =
\sum_{\sigma \in [s]} \vec{q}_{\sigma} = \one$ and $\vec{q}_{\sigma'} \geq 0$.

Now, notice that the revenue of the new signaling scheme satisfies
$$
\Revenue(\SignalFunc')
~ = ~
\Revenue(\SignalFunc) +
\epsilon \left[ \sum_{z = 1}^{r} x_z \Revenue(\sigma_z, \SignalFunc) - \sum_{z
= r + 1}^{t} x_z \Revenue(\sigma_z, \SignalFunc) \right] \ .
$$
Since $\SignalFunc$ is optimal, it must be the case that $\sum_{z = 1}^{r} x_z
\Revenue(\sigma_z, \SignalFunc) - \sum_{z = r + 1}^{t} x_z \Revenue(\sigma_z,
\SignalFunc) = 0$ (recall that $\epsilon$ can be taken to be positive
or negative).
Thus, we can take $\epsilon$ to be either $-\frac{1}{\Max_{1 \leq z \leq r}
\{x_z\}}$ or $\frac{1}{\Max_{r + 1 \leq z \leq t} \{x_z\}}$ and get an optimal
signaling scheme with less than $s$ signals, in contradiction to the choice of
$\SignalFunc$.
\end{proof}

If one can achieve revenue $R$ with $s$ signals, then it is trivial to see that
with $s' \leq s$ signals, one can achieve revenue $\left\lfloor \frac{s'}{s}
\right\rfloor R$.
This turns out to be the best possible in some cases.

\begin{example}\label{gap_example}
Consider $n + 1$ items $\{0, 1, \dots, n\}$, each chosen with probability
$\frac{1}{n + 1}$ and $n + 1$ bidders with valuations $V(i, i) = 1$ for $i =
1, \dots, n$, $V(0, 0) = n$, and $V(i, j) = 0$ otherwise.
The optimal signaling scheme $\SignalFunc$ uses $n$ signals $\sigma_1, \dots,
\sigma_n$ such that $\SignalFunc(\sigma_i, i) = 1$ and $\SignalFunc(\sigma_i,
0) = \frac{1}{n}$ for $i = 1, \dots, n$ and zero otherwise.
The revenue of this scheme is $\frac{n}{n + 1}$.
Now, for any $s < n$ and an $s$-signal signaling scheme $\SignalFunc'$, we have
\begin{align*}
\Revenue(\SignalFunc')
~ = ~ &
\sum_{\sigma' \in [s]} \SecondMax_{i \in [n]} \left\{ \sum_{j \in [m]}
\SignalFunc'(\sigma', j) \cdot \Psi(i, j) \right\} \\
\leq ~ &
\sum_{\sigma' \in [s]} \Max_{i \in [n] - \{0\}} \left\{ \sum_{j \in [m]}
\SignalFunc'(\sigma', j) \cdot \Psi(i, j) \right\} \\
\leq ~ &
\sum_{\sigma' \in [s]} \frac{1}{n + 1}
~ = ~
\frac{s}{n + 1} \ .
\end{align*}
\end{example}

\bibliographystyle{abbrv}
\bibliography{bibfile}

\LongVersion 
\clearpage
\pagenumbering{roman}
\appendix

\begin{figure}[t]
\begin{center}
\textbf{\large{APPENDIX}}
\end{center}
\end{figure}

\begin{proof}[Proof of Observation~\ref{observation:SecondPrice}]
\ProofObservationSecondPrice{}
\end{proof}

\begin{proof}[Proof of Theorem~\ref{theorem:SocialWelfare}]
\ProofTheoremSocialWelfare{}
\end{proof}
\LongVersionEnd 

\end{document}